\newtheorem{Theorem}{Theorem}
\newtheorem{Example}{Example}
\newtheorem{Remark}{Remark}
\newtheorem{Lemma}{Lemma}
\newtheorem{Construction}{Construction}
\newenvironment{psmallmatrix}
  {\left(\begin{smallmatrix}}
  {\end{smallmatrix}\right)}
\begin{document}
	\title{MDS Array Codes With (Near) Optimal Repair Bandwidth for All Admissible Repair Degree
	}
	
	\author{Jie~Li,~\IEEEmembership{Member,~IEEE,} Yi~Liu,~\IEEEmembership{Member,~IEEE,}
	        Xiaohu~Tang,~\IEEEmembership{Senior~Member,~IEEE,}
	        Yunghsiang S.~Han,~\IEEEmembership{Fellow,~IEEE,} Bo~Bai,~\IEEEmembership{Senior~Member,~IEEE,} and Gong Zhang
	\thanks{J. Li, B. Bai, and G. Zhang are with the Theory Lab, Central Research Institute, 2012 Labs, Huawei Technologies Co., Ltd., Shatin, New Territories, Hong Kong SAR, China (e-mails: li.jie9@huawei.com; baibo8@huawei.com; nicholas.zhang@huawei.com).}
	\thanks{Y. Liu was with the Information Coding and Transmission Key Lab of Sichuan Province, CSNMT Int. Coop. Res. Centre (MoST), Southwest Jiaotong University, Chengdu, 610031, China (e-mail: yiliu.swjtu@outlook.com).}
	\thanks{X. Tang is with the Information Coding and Transmission Key Lab of Sichuan Province, CSNMT Int. Coop. Res. Centre (MoST), Southwest Jiaotong University, Chengdu, 610031, China (e-mail: xhutang@swjtu.edu.cn).}
	\thanks{Yunghsiang S.~Han is with the Shenzhen Institute for Advanced Study, University of Electronic Science and Technology of China, Shenzhen, 518110, China (e-mail: yunghsiangh@gmail.com).}}

	\maketitle
	
	\begin{abstract}
	Abundant high-rate $(n, k)$ minimum storage regenerating (MSR) codes have been reported in the literature. However, most of them require contacting all the surviving nodes during a node repair process, resulting in a repair degree of $d=n-1$.  In practical systems, it may not always be feasible to connect and download data from all surviving nodes, as some nodes may be unavailable. Therefore, there is a need for MSR code constructions with a repair degree of $d<n-1$. Up to now, only a few $(n, k)$ MSR code constructions with repair degree $d<n-1$ have been reported, some have a large sub-packetization level, a large finite field, or restrictions on the repair degree $d$. In this paper, we propose a new $(n, k)$ MSR code construction that works for any repair degree $d>k$, and has a smaller sub-packetization level or finite field than {some existing constructions}. Additionally, in conjunction with a previous generic transformation to reduce the sub-packetization level, we obtain an MDS array code with a small sub-packetization level and $(1+\epsilon)$-optimal repair bandwidth (i.e., $(1+\epsilon)$ times the optimal repair bandwidth) for repair degree $d=n-1$. This code outperforms {some existing ones} in terms of either the sub-packetization level or the field size.	
	\end{abstract}

	\begin{IEEEkeywords}
	Maximum distance separable, minimum storage regenerating codes, repair bandwidth, repair degree, sub-packetization.
	\end{IEEEkeywords}

\section{Introduction}
In distributed storage systems, data are stored across multiple unreliable storage nodes. Thus, redundancy needs to be introduced to provide fault tolerance. Classic Maximum Distance Separable (MDS) codes can provide an optimal tradeoff between fault tolerance and storage overhead and thus is an efficient redundancy mechanism deployed for many years. However, repairing a failed node requires an excessive \textit{repair bandwidth}, defined as the amount of data downloaded to repair a failed node. 

One way to reduce the repair bandwidth is to use MDS array codes, where the codeword is an array of size $N\times n$ instead of a vector. For a distributed storage system encoded by an $(n,k)$ MDS array code, each node stores $N$ symbols, where $N$ is called the \textit{sub-packetization level}. {The cut-set  bound in \cite{dimakis2010network} shows that the repair bandwidth of $(n,k)$ MDS array codes with sub-packetization level $N$ is lower bounded by 
$\gamma_{\rm optimal}=\frac{d}{d-k+1}N$. Here, $d$ such that $k\le d<n$ denotes the number of helper nodes contacted during the repair process and is named \textit{repair degree}.} MDS array codes with repair bandwidth attaining this lower bound are said to have the \textit{optimal repair bandwidth} and are also referred to as MSR codes in \cite{dimakis2010network}.

During the past decade, various MSR codes have been proposed \cite{Goparaju,invariant subspace,hadamard,tian2014characterizing,Hadamard strategy,Long IT,Barg2,Zigzag,Sasidharan-Kumar2,Rashmi2011optimal,han2015update,li2017generic,li2018generic,hou2020binary,elyasi2020cascade,li2022generic,Barg1,chen2019explicit,liu2022generic,vajha2021small,li2022constructing,li2023msr,liu2022optimal}. However, in the high-rate (e.g., $\frac{k}{n}>\frac{1}{2}$) regime, existing constructions have two imperfections: i) 
most constructions have a repair degree of $d=n-1$, meaning that repairing a failed node requires contacting all the remaining surviving nodes. However, it is not always feasible to connect and download data from all the surviving nodes in a practical system, as some nodes may be unavailable due to other assigned jobs or network congestion \cite{mahdaviani2018product}; ii)  all the known $(n,k)$ MSR code constructions with repair degree $d=n-1$ require a significantly large sub-packetization level $N$, i.e., $N\ge r^{\frac{n}{r+1}}$, where $r=n-k$. This can lead to reduced design space in various system parameters and make managing meta-data difficult, hindering implementation in practical systems \cite{Rawat}. 

\subsection{Related work on $(n, k)$ MSR codes with repair degree $d<n-1$}

Up to now, only a few results on MSR codes with repair degree $d<n-1$ have been reported in the literature. In \cite{rawat2016progress,goparaju2017minimum}, the authors showed the existence of MSR codes with repair degree $d<n-1$, and some explicit constructions were given in  \cite{Barg1,liu2022generic,vajha2021small,chen2019explicit,li2022constructing,liu2022optimal}. In this paper, we  focus only on explicit constructions.

In Sections IV and VIII of \cite{Barg1}, Ye and Barg proposed two $(n, k)$ MSR codes with sub-packetization level $(d-k+1)^n$ by using diagonal matrices and permutation
matrices as the building blocks of the parity-check matrices.  In \cite{vajha2021small}, an $(n, k)$ MSR code with a smaller sub-packetization level of $(d-k+1)^{\frac{n}{d-k+1}}$ was generated, however,  $d$ is restricted to be $k+1, k+2, k+3$. 
In \cite{chen2019explicit}, Chen and Barg presented an $(n, k)$ MSR codes with a sub-packetization level of $(d-k+1)^n$.
In \cite{liu2022generic}, Liu \textit{et al.} gave an $(n, k)$ MSR code with a sub-packetization level of 
$(d-k+1)^{\frac{n}{2}}$.  Recently, in
\cite{li2022constructing}, an $(n, k)$ MSR code was constructed with a sub-packetization level of $2^{\frac{n}{3}}$ and a repair degree of $d=k+1$. This MSR code was generalized to support any repair degree $d$ with $d\in [k+1: n-1)$ and a sub-packetization level of $w^{\frac{n}{w+1}}$ in a follow-up work \cite{li2023msr}, where $w=d-k+1$, but requires searching over a finite field with a size larger than $nw+\sum\limits_{t=1}^{w+1}{w+1\choose t} \frac{wt(t-1)}{2}$. Despite the additional effort required in searching (i.e., explicit constructions are unknown for general code parameters $n, k$, and $d$), the MSR codes in \cite{li2022constructing} and  \cite{li2023msr} have the smallest sub-packetization level among all existing MSR codes with the same $n,k,d$. Independent and parallel to this work, Zhang and Zhou proposed an $(n, k)$ MSR code with a sub-packetization level of $2^{\frac{n}{2}}$ in a very recent work \cite{zhang2023vertical}, which is similar to the one proposed in this paper but requires a larger finite field when $d-k+1>2$.
For convenience, in this paper, these eight codes are referred to as YB code 1, YB code 2, VBK code, CB code, LLT code, WLHY code, LWHY code, and ZZ code, respectively. 

Overall, most of the aforementioned $(n, k)$ MSR codes with
repair degree $d<n-1$ either {have a large sub-packetization level (i.e., $N=(d-k+1)^n$)} \cite{Barg1,chen2019explicit} or are limited to only a few values of the repair degree $d$ \cite{vajha2021small,li2022constructing}. We want to point out that there are a few MSR codes with multiple repair degrees, e.g., \cite{liu2022optimal} and MSR codes in Sections V and IX of \cite{Barg1}, which are outside of the scope of this paper as we only focus on MSR codes with a single repair degree.

\subsection{Related work on $(n, k)$ MDS array codes with small sub-packetization level}
Large sub-packetization levels in codes can hinder their implementation in practical systems \cite{Rawat}, making it desirable to construct codes with small sub-packetization levels. Recent works have demonstrated that high-rate MDS array codes with small sub-packetization levels can be constructed by sacrificing the optimality of the repair bandwidth. 

In \cite{Rawat}, two high-rate MDS array codes with small sub-packetization levels and $(1+\epsilon)$-optimal repair bandwidth were proposed. The first code has
a sub-packetization level of $N=r^\tau$ and the repair bandwidth is no larger than $(1+\frac{1}{\tau})$ times the optimal repair bandwidth, where $\tau$ is an integer and 
$1\le \tau<\lceil\frac{n}{r}\rceil$. However,
this code is constructed over a significantly large finite field $\mathbf{F}_q$, i.e., $q\ge n^{(r-1)N+1}$, which may hinder its deployment
in practical systems. The second MDS array code is obtained by
combining an MDS array code with optimal repair bandwidth
and another error-correcting code with specific parameters.
For convenience,
we refer to these two codes as RTGE code 1 and
RTGE code 2 in this paper. 

Recently, a generic transformation was presented in \cite{eMSR_d_eq_n-1} that can convert any MSR code into an MDS array code with a small sub-packetization level and $(1+\epsilon)$-optimal repair bandwidth, resulting in several explicit MDS array codes with small sub-packetization levels. Note that all the MDS array codes in \cite{Rawat,eMSR_d_eq_n-1} have a repair degree of $d=n-1$.

\subsection{Main Contribution}

The main contribution of this paper is the derivation of a new $(n, k)$ MSR code construction with any given repair degree $d$ such that $k<d\le n-1$, {where its sub-packetization level is $w^{\lceil\frac{n}{2}\rceil}$ with $w=d-k+1$. The required field size $q$ is a prime power such that $q>\lceil\frac{n}{2}\rceil(w+2)$ if $w=2$, $q>\lceil\frac{n}{2}\rceil(w+1)$ if $2<w<r$,  and $q>\lceil\frac{n}{2}\rceil w$ if $w=r$}. {Compared with existing constructions, the new MSR code $\mathcal{C}_1$ has advantages in terms of either the sub-packetization level or the field size. Please refer to Tables \ref{Table comp MSR} and \ref{Table comp_duplication} for more details.
%
%
%
%

}

Furthermore, the MSR code $\mathcal{C}_1$ can also be used for $d=n-1$. When combined with the generic transformation in \cite{eMSR_d_eq_n-1}, we obtain a new $(n,k)$ MDS array code $\mathcal{C}_2$ {with a small sub-packetization level of $r^{\lceil\frac{n}{2s}\rceil}$ and repair degree $d=n-1$, where $r=n-k$, $s$ is any factor of $n$, and the require field size is $q>sr\lceil\frac{n}{2s}\rceil$}. {The sub-packetization level or finite field size of $\mathcal{C}_2$ is smaller than that of existing ones.} 

The remainder of the paper is organized as follows.
Section II reviews some necessary preliminaries of high-rate MDS array codes. The new $(n,k)$ MSR code $\mathcal{C}_1$ is presented in Section \ref{sec:C3}. Section \ref{sec:eMSRd=n-1} gives an MDS array code $\mathcal{C}_2$ with a small sub-packetization level. Section \ref{sec:comp} compares key parameters
among the MDS array codes proposed in this paper and some
existing ones. Finally, Section \ref{sec:conclusion} concludes 
the work.
\section{Preliminaries}
In this section, we introduce some preliminaries on MDS array codes and a special partition for a given basis. Throughout this paper, we assume that $q$ is a prime power and $\mathbf{F}_q$ is the finite field with $q$ elements. Let $[a: b)$ be the set $\{a, a+1, \ldots, b-1\}$ for two integers $a$ and $b$. For a matrix $A$, denote by {$A[a,b]$,} the $(a,b)$-th entry and $A[a,:]$ the $a$-th row, where $a,b\ge0$.

\subsection{$(n,k)$ Array Codes}
Let $\mathbf{f}_0, \mathbf{f}_1, \ldots, \mathbf{f}_{n-1}$ be the data stored across a distributed storage system consisting of $n$ nodes based on an $(n,k)$ array code, where $\mathbf{f}_i$ is a column vector of length $N$ over $\mathbf{F}_q$.  We consider   $(n,k)$ array codes defined by the following parity-check form:
\begin{equation}\label{Eqn parity check eq}
  \underbrace{\left(
        \begin{array}{cccc}
                A_{0,0} & A_{0,1} & \cdots & A_{0,n-1} \\
                A_{1,0} & A_{1,1} & \cdots & A_{1,n-1} \\
       \vdots & \vdots & \ddots & \vdots \\
            A_{r-1,0} & A_{r-1,1} & \cdots & A_{r-1,n-1} \\
            \end{array}
            \right)}_{A}
\left(\begin{array}{c}
              \mathbf{f}_0\\
              \mathbf{f}_1\\
             \vdots\\
         \mathbf{f}_{n-1}
            \end{array}
         \right)=\mathbf{0}_{rN},
\end{equation}
where $r=n-k$, $\mathbf{0}_{rN}$ denotes the zero column vector of length $rN$, and will be abbreviated as $\mathbf{0}$ in the sequel if its length is clear. The $rN\times nN$ block matrix $A$ in \eqref{Eqn parity check eq} is called the \textit{parity-check matrix} of the code, which can be written as
$
A=(A_{t,i})_{t\in [0: r),i\in[0: n)},$
to indicate the block entries, where $A_{t,i}$ is an $N\times N$ matrix.

Note that for each $t\in[0:r)$, $\sum\limits_{i=0}^{n-1}A_{t,i}\mathbf{f}_i=0$ contains $N$ {equations. For convenience}, we say {$\sum\limits_{i=0}^{n-1}A_{t,i}\mathbf{f}_i=0$} the $t$-th \textit{parity-check group}.

\subsection{The MDS property}
An $(n,k)$ array code defined by \eqref{Eqn parity check eq} is MDS if the source file can be reconstructed by connecting any $k$ out of the $n$ nodes. That is, any $r\times r$ sub-block matrix {$(A_{t,i})_{t\in [0: r),i\in J}$ of the block matrix $(A_{t,i})_{t\in [0: r),i\in[0: n)}$
is non-singular \cite{Barg1}, where $J$ is any $r$-subset of $[0: n)$.} In the following, we introduce some lemmas that will be helpful when verifying the MDS property of the new codes in the later sections.

\begin{Lemma}\label{lem the important for the proof of this section}
For $t, i\in [0:r)$, let $B_{t,i}$ be an $N\times N$ upper triangular matrix,  i.e.,
\begin{equation}\label{Eqn_upper}
B_{t,i}[a,b]=0 \mbox{~for~} 0\le b<a<N,
\end{equation}  
then
the block matrix $B=(B_{t,i})_{t\in [0: r),i\in[0: r)}$
   is non-singular
if  
\begin{itemize}
\item [i)] $B_{t,i}[a,a]=(B_{1,i}[a,a])^t$ for $i, t\in [0:r)$ and $a\in [0: N)$,

\item [ii)] $B_{1,i}[a,a]\ne B_{1,j}[a,a]$ for any $i,j\in [0:r)$ with $j\ne i$ and $a\in [0: N)$.
\end{itemize}

\end{Lemma}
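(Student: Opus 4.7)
The plan is to compute $\det(B)$ by exploiting the block-upper-triangular structure induced by the assumption that every block $B_{t,i}$ is upper triangular. Think of $B$ as an $(rN)\times(rN)$ matrix whose rows are indexed by pairs $(t,a)$ with $t\in[0:r), a\in[0:N)$ and whose columns are indexed by pairs $(i,b)$ with $i\in[0:r), b\in[0:N)$, where the entry at $((t,a),(i,b))$ is $B_{t,i}[a,b]$. Condition \eqref{Eqn_upper} says this entry vanishes whenever $b<a$, \emph{regardless} of the block indices $t,i$.

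First, I would simultaneously permute the rows and columns of $B$ so that rows are ordered by $a$ first and then by $t$, and columns by $b$ first and then by $i$. Since this applies the same permutation on rows and columns (up to the obvious pair of permutations that preserves the determinant up to an overall sign that cancels), the absolute value of the determinant is unchanged. In the reordered matrix, the index $a$ (respectively $b$) plays the role of a macro-block row (respectively column), and within each macro-block the entries are the $r\times r$ array $(B_{t,i}[a,b])_{t,i\in[0:r)}$. By \eqref{Eqn_upper}, every macro-block with $b<a$ is identically zero, so the reordered matrix is block upper triangular with $N$ diagonal blocks of size $r\times r$.

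The $a$-th diagonal block is precisely $D_a:=(B_{t,i}[a,a])_{t,i\in[0:r)}$. Using hypothesis (i), this becomes $D_a=\bigl((B_{1,i}[a,a])^t\bigr)_{t,i\in[0:r)}$, which is a Vandermonde matrix in the $r$ scalars $B_{1,0}[a,a],B_{1,1}[a,a],\ldots,B_{1,r-1}[a,a]$. By hypothesis (ii) these scalars are pairwise distinct, hence
\[
\det D_a \;=\; \prod_{0\le i<j<r}\bigl(B_{1,j}[a,a]-B_{1,i}[a,a]\bigr)\;\neq\;0.
\]
Therefore $\det B=\pm\prod_{a=0}^{N-1}\det D_a\neq 0$, and $B$ is non-singular.

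There is really no hard step here; the only thing one must be careful about is verifying that the reordering of rows and columns simultaneously sends \emph{all} entries with $b<a$ to positions strictly below the block diagonal, so that no contribution outside the Vandermonde blocks can spoil the determinant. This follows because the vanishing condition \eqref{Eqn_upper} is uniform in $t$ and $i$, making the global zero pattern depend only on the macro-indices $a$ and $b$. Everything else is immediate from Vandermonde.
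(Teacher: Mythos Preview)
Your proof is correct and follows essentially the same approach as the paper: both conjugate $B$ by the permutation that swaps the roles of the block index and the within-block index, obtaining a block upper triangular matrix whose $N$ diagonal $r\times r$ blocks are the Vandermonde matrices $D_a=\bigl((B_{1,i}[a,a])^t\bigr)_{t,i}$, and then conclude via hypothesis (ii). The only cosmetic difference is that the paper writes the permutation explicitly as a matrix $\Psi$ and computes $\det(\Psi B\Psi^\top)=\det(B)$, so no stray sign appears.
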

\begin{proof}
The proof is given in Appendix \ref{sec:upper triangle lemma}.
\end{proof}

\subsection{Repair Mechanism}\label{sec:repair}

For an $(n,k)$ array code, suppose that node $i$ ($i\in[0: n)$) fails.
Let $H_i$ be any given $d$-subset of $[0: n)\setminus\{i\}$, which denotes  the  set of indices of the helper nodes, and let $L_i=[0: n)\setminus (H_i\cup\{i\})$
be the set of indices of unconnected nodes.
The data downloaded from helper node $j$ can be represented by $R_{i,j}\mathbf{f}_j$, where $R_{i,j}$ is a $\beta_{i,j} \times N$ matrix of full rank with $\beta_{i,j}\le N$. {We refer to $R_{i,j}$ as the \textit{repair matrix} of node $i$.}

Note that the content of node $i$ can be acquired from the parity-check equations.
In this paper, similar to \cite{eMSR_d_eq_n-1}, for convenience, we only consider the symmetric situation where $\delta$ ($N/r\le \delta\le N$) linearly independent equations are acquired from each of the  $r$ parity-check groups, where these $\delta$ linear independent equations are linear combinations of the corresponding $N$ parity-check equations in a parity-check group.
Precisely, the $\delta$  linear independent equations from the $t$-th parity-check group can be obtained by multiplying it with a $\delta \times N$ matrix $S_{i,t}$ of full rank, where $S_{i,t}$ is called the \textit{select  matrix}.
As a consequence, the following linear equations are available:

\begin{equation}\label{Eqn repairment gene}
\underbrace{\left(\begin{array}{c}
S_{i,0} A_{0,i} \\
S_{i,1} A_{1,i}\\
\vdots\\
S_{i,r-1} A_{r-1,i}
\end{array}\right)\mathbf{f}_i}_{\mathrm{useful ~data}}+\sum_{l\in L_i}\underbrace{\left(\begin{array}{c}
S_{i,0} A_{0,l}\\
S_{i,1} A_{1,l}\\
\vdots\\
S_{i,r-1} A_{r-1,l}
\end{array}\right)\mathbf{f}_l}_{\mathrm{unknown ~data~by~}\mathbf{f}_l}
+\sum_{j\in H_i}\underbrace{\left(\begin{array}{c}
S_{i,0} A_{0,j}\\
S_{i,1} A_{1,j}\\
\vdots\\
S_{i,r-1} A_{r-1,j}
\end{array}\right)\mathbf{f}_j}_{\mathrm{interference ~by~}\mathbf{f}_j}=0.
\end{equation}

The repair of node $i$ requires solving \eqref{Eqn repairment gene}  from the downloaded data  $R_{i,j}\mathbf{f}_{j}$, $j\in H_i$.
Then the repair bandwidth of node $i$ is
$
\gamma_i=\sum\limits_{j\in H_i}{\rm rank}(R_{i,j}).$
If $\gamma_i={d\over d-k+1}N$, then node $i$ is said to have the \textit{optimal repair bandwidth}, which can be accomplished if ${\rm rank}(R_{i,j})={N\over d-k+1}$ for all $j\in H_i$. If the repair bandwidth of an MDS array code is $(1+\epsilon){N\over d-k+1}$ where $\epsilon<1$ is a small constant, the MDS array code is said to have \textit{$(1+\epsilon)$-optimal repair bandwidth} in \cite{li2022pmds}.

\subsection{Partition of basis $\{e_0,\ldots,e_{N-1}\}$}

Assuming that $N=w^m$ for two integers $w$ and $m$ with $w,m\ge2$,  let $e_0,\ldots,e_{w^m-1}$ be a basis of $\mathbf{F}_q^{w^m}$.
For simplicity, one can regard them as the standard basis, i.e.,
\begin{equation*}
    e_i=(0,\ldots,0,1,0,\ldots,0),\,\,i\in [0: w^m),
\end{equation*}
with only the $i$-th entry being nonzero.

Then for any $a,b\in [0: N)$, we have
\begin{eqnarray}\label{Eqn e_ae_bT}
  e_a(e_b)^{\top} &=& \left\{
\begin{array}{ll}
1, & \textrm{if $a=b$},\\
0, & \textrm{otherwise},
\end{array}
\right.
\end{eqnarray}
where $\top$ represents the transpose operator.

In \cite{eMSR_d_eq_n-1}, a class of special partition sets of  $\{e_0,\ldots,e_{w^m-1}\}$ is given for $w\ge2$.  As these special partition sets play an important role in our proposed new construction, we revisit them for completeness in the following.

Given an integer $a\in [0: w^m)$, denote by $(a_0,\ldots,a_{m-1})$
its $w$-ary expansion with $a_0$ being the most significant digit, i.e., $a=\sum\limits_{j=0}^{m-1}w^{m-1-j}a_{j}$. For convenience, we also write $a=(a_0,\ldots,a_{m-1})$. 
For $i\in [0: m)$ and $t\in [0: w)$, define a subset of $\{e_0,\ldots,e_{w^m-1}\}$ as
\begin{equation}\label{Eqn_Vt}
V_{i,t}=\{e_a|a_i=t, 0\le a< w^m\},
\end{equation}
where $a_i$ is the $i$-th element in the $w$-ary expansion of $a$.

Obviously, $|V_{i,t}|=w^{m-1}$, and  $V_{i,0},V_{i,1},\ldots,V_{i,w-1}$ is a partition of the set $\{e_0,\ldots,e_{w^m-1}\}$ for any $i\in [0 : m)$.
Table \ref{example partition} gives two examples of the set partitions defined in \eqref{Eqn_Vt}.
\begin{table}[htbp]
\begin{center}
\caption{(a) and (b) denote the $m$  partition sets of $\{e_0,\ldots,e_{w^m-1}\}$    defined by \eqref{Eqn_Vt} for $m=3,w=2$, and $m=2,w=3$, respectively.}
\label{example partition}\begin{tabular}{|c|c|c|c|c|c|c|c|}
\hline $i$ & 0 & 1 & 2 & $i$ & 0 & 1 & 2\\
\hline \multirow{4}{*}{$V_{i,0}$ }&$e_0$&$e_0$&$e_0$&\multirow{4}{*}{$V_{i,1}$ }&$e_4$&$e_2$&$e_1$\\
  & $e_1$&$e_1$&$e_2$ && $e_5$&$e_3$&$e_3$\\
  &$e_2$&$e_4$&$e_4$&& $e_6$&$e_6$&$e_5$\\
  &$e_3$&$e_5$&$e_6$&& $e_7$&$e_7$&$e_7$\\
\hline\multicolumn{8}{c}{(A)}
\end{tabular}\hspace{5mm}
\begin{tabular}{|c|c|c|c|c|c|c|c|c|c}
\hline $i$ & 0 & 1 & $i$ & 0 & 1 & $i$ & 0 & 1\\
\hline \multirow{3}{*}{$V_{i,0}$ }&$e_0$&$e_0$&\multirow{3}{*}{$V_{i,1}$ }&$e_3$&$e_1$&\multirow{3}{*}{$V_{i,2}$ }&$e_6$&$e_2$\\
  & $e_1$&$e_3$&& $e_4$&$e_4$&& $e_7$&$e_5$\\
  & $e_2$&$e_6$&& $e_5$&$e_7$&& $e_8$&$e_8$\\
\hline\multicolumn{9}{c}{(B)}
\end{tabular}
\end{center}
\end{table}

For convenience of notation,  we also denote by $V_{i,t}$   the $w^{m-1}\times w^m$ matrix {whose rows are formed by vectors $e_a$
in their corresponding sets, and $a$  is sorted in ascending order}. For example, when $m=3$ and $w=2$, $V_{0,0}$ can be viewed as a $4\times 8$ matrix as follows
\begin{eqnarray*}
V_{0,0}=\left(e_0^{\top} e_1^{\top} e_2^{\top} e_3^{\top}\right)^{\top}.
\end{eqnarray*}

\subsection{Basic Notations and Equalities}
In this subsection, we introduce some useful notations and equalities that will facilitate the proof of the new codes. Let $N=w^m$, 
for $a=(a_0,\ldots,a_{m-1})\in [0: N)$, $i\in [0: m)$ and $u\in [0: w)$, define $a(i, u)$ as
\begin{equation}\label{Eqn_ait}
a(i,u)=(a_0,\ldots,a_{i-1},u,a_{i+1},\ldots,a_{m-1}),
\end{equation}
i.e., replacing the $i$-th digit by $u$.

For $a=(a_0,a_1,\ldots,a_{m-2})\in [0:N/w)$ and $i\in [0: m)$, define 
\begin{equation}\label{Eqn_g}
g_{i,u}(a)=(a_0,a_1,\ldots,a_{i-1},u,a_i,\ldots,a_{m-2}),
\end{equation}
i.e., inserting $u$ to the $i$-th digit of $(a_0,a_1,\ldots,a_{m-2})$.
Then for $i,j\in [0:m)$ and $u,v \in [0: w)$, we have that the $j$-th digit of $g_{i,u}(a)$ is 
\begin{eqnarray}\label{Eqn ga b}
(g_{i,u}(a))_j=\left\{
\begin{array}{ll}
a_j, &\mbox{~if~}j<i,\\
u,&\mbox{~if~}j=i,\\
a_{j-1},&\mbox{~if~}j>i.
\end{array}
\right.
\end{eqnarray}
Replacing the $j$-th digit of $g_{i,u}(a)$ by $v$ gives
\begin{eqnarray}\label{Eqn ga b (j,v)}
 (g_{i,u}(a))(j,v)=\left\{
\begin{array}{ll}
g_{i,u}(a(j,v)), &\mbox{~if~}j<i,\\
g_{i,v}(a),&\mbox{~if~}j=i,\\
g_{i,u}(a(j-1,v)),&\mbox{~if~}j>i.
\end{array}
\right.
\end{eqnarray}

  Let $e_0^{(N/w)}, e_1^{(N/w)}, \ldots, e_{N/w-1}^{(N/w)}$ be the standard basis vectors of $\mathbf{F}_q^{N/w}$ over $\mathbf{F}_q$, then by \eqref{Eqn_g}, $V_{i,u}$ in \eqref{Eqn_Vt} can be rewritten as
\begin{equation}\label{Eqn re Vu for liu}
  V_{i,u}=\sum\limits_{a=0}^{N/w-1}(e_a^{(N/w)})^{\top} e_{g_{i,u}(a)}, u\in [0:w),
\end{equation}
i.e., the $a$-th row of the matrix $V_{i,u}$ is
\begin{eqnarray}\label{Eqn re Vu[a] for liu}
   V_{i,u}[a,:]=e_{g_{i,u}(a)}, 0\le u<w,a\in [0:N/w).
\end{eqnarray}

{
\section{A new $(n,k)$ MSR code $\mathcal{C}_1$ with repair degree {$k<d<n$}}\label{sec:C3}
In this section,  we propose an $(n=2m,k=n-r)$ MSR code construction $\mathcal{C}_1$ with sub-packetization level $N=w^{m}$ and repair degree $d=k+w-1<n-1$ for some { $w\in [2: r+1)$. The new MSR code can be viewed as a combination of the YB code 1 in \cite{Barg1} and CB code in \cite{chen2019explicit}, i.e., half of the parity-check matrix of $\mathcal{C}_1$ is similar to the parity-check matrix of the YB code 1 while the other half is similar to that of CB code. This non-trivial combination leads to $\mathcal{C}_1$ having a larger code length or, equivalently, a smaller sub-packetization level than that of the CB code and YB code 1.}
Throughout this section, {let $c$ be a primitive element of  the finite field $\mathbf{F}_{q}$.}  

\begin{Construction}\label{Con_C0}
For $N=w^m$ and {$2\le w\le r$}, we define the parity-check matrix $(A_{t,i})_{t\in [0:r),i\in [0:n)}$ of the $(n=2m,k=n-r)$ array code $\mathcal{C}_1$ over $\mathbf{F}_{q}$ as
  \begin{eqnarray}\label{Eqn the PCM of code C0}
    A_{t,i}=\left\{\begin{array}{ll}
       \sum\limits_{a=0}^{N-1}\lambda_{i,a_i}^te_a^\top e_a+\sum\limits_{a=0,a_i=0}^{N-1}\sum\limits_{u=1}^{w-1}(\lambda_{i,0}^t-\lambda_{i,u}^t) e_a^\top e_{a(i,u)},  & \textrm{if~}i\in [0:m),  \\
        \sum\limits_{a=0}^{N-1}\lambda_{i,a_{i-m}}^t e_a^\top e_a,  & \textrm{if~}i\in [m:n),  \\ 
    \end{array}\right.
  \end{eqnarray}
where the repair degree is $d=k+w-1$, $\lambda_{i,j}\in \mathbf{F}_{q}$, {$a_i$ denotes the $i$-th digit of the $w$-ary expansion of $a$, and $\sum\limits_{a=0,a_i=0}^{N-1}$ denotes $a$ runs through all $[0: N)$ but with the restriction $a_i=0$.} We further define the repair matrix and select matrix of node $i$ as
  \begin{eqnarray}\label{Eqn the repair and select matrix of code C0}
     R_{i,j}=S_{i,t}=\left\{\begin{array}{ll}
          V_{i,0}, & \textrm{if~}i\in [0:m), \\
          V_{i,0}+V_{i,1}+\cdots+V_{i,w-1}, & \textrm{if~}i\in [m:n),
     \end{array}\right. 
  \end{eqnarray}
  for $t\in [0:r)$ and $j\in H_i$, where $H_i$ is any $d$-subset of $[0: n)\setminus\{i\}$, $V_{i,0},V_{i,1},\ldots,V_{i,w-1}$ for $i\in [0:m)$ are defined in \eqref{Eqn_Vt} and we further define
\begin{equation}\label{Eqn_Vt2}
V_{i,u}=V_{i-m,u}, \mbox{~for~}i\in  [m, n), u\in [0: w)
\end{equation}
for convenience of notation.
\end{Construction}

In what follows, we first give an example to show the connection between the new code and the YB code 1, CB code, and then anther example to show the main idea of this construction.

{
\begin{Example}
Consider the example where $r=3$, $w=2$, and $m=6$. In this case, let $(A_{t,i})_{t\in [0:3),i\in [0:12)}$ be the parity-check matrix of the $(12,9)$ code $\mathcal{C}_1$, then $(A_{t,i})_{t\in [0:3),i\in [0:6)}$ is exactly the parity-check matrix of the $(6,3)$ YB code 1 in \cite{Barg1} while $(A_{t,i})_{t\in [0:3),i\in [6:12)}$ is exactly the parity-check matrix of the $(6,3)$ CB code in \cite{chen2019explicit}.
\end{Example}
}

\begin{Example}\label{ex_C0}
An example of the $(6,3)$ MSR code $\mathcal{C}_1$ with sub-packetization level $8$ and repair degree $4$ over $\mathbf{F}_q$, where $q$ is any prime power larger than $12$. The parity-check matrix $(A_{t,i})_{t\in [0:3),i\in [0:6)}$ is defined as
\begin{equation*}
A_{t,0}=\begin{pmatrix}
        \lambda_{0,0}^te_0+ (\lambda_{0,0}^t-\lambda_{0,1}^t)e_4 \\
       \lambda_{0,0}^t e_1+(\lambda_{0,0}^t-\lambda_{0,1}^t)e_5 \\
       \lambda_{0,0}^t e_2+ (\lambda_{0,0}^t-\lambda_{0,1}^t)e_6 \\
       \lambda_{0,0}^t e_3  + (\lambda_{0,0}^t-\lambda_{0,1}^t)e_7 \\
        \lambda_{0,1}^te_4 \\
        \lambda_{0,1}^te_5 \\
        \lambda_{0,1}^te_6 \\
        \lambda_{0,1}^te_7 \\
     \end{pmatrix},~A_{t,1}=\begin{pmatrix}
        \lambda_{1,0}^te_0+ (\lambda_{1,0}^t-\lambda_{1,1}^t)e_2 \\
       \lambda_{1,0}^t e_1+(\lambda_{1,0}^t-\lambda_{1,1}^t)e_3 \\
       \lambda_{1,1}^t e_2 \\
       \lambda_{1,1}^t e_3   \\
        \lambda_{1,0}^te_4+ (\lambda_{1,0}^t-\lambda_{1,1}^t)e_6 \\
        \lambda_{1,0}^te_5+ (\lambda_{1,0}^t-\lambda_{1,1}^t)e_7 \\
        \lambda_{1,1}^te_6 \\
        \lambda_{1,1}^te_7 \\
     \end{pmatrix},
\end{equation*}    
\begin{equation*} 
 A_{t,2}=\begin{pmatrix}
        \lambda_{2,0}^te_0+ (\lambda_{2,0}^t-\lambda_{2,1}^t)e_1 \\
       \lambda_{2,1}^t e_1\\
       \lambda_{2,0}^t e_2+(\lambda_{2,0}^t-\lambda_{2,1}^t)e_3  \\
       \lambda_{2,1}^t e_3   \\
        \lambda_{2,0}^te_4+ (\lambda_{2,0}^t-\lambda_{2,1}^t)e_5 \\
        \lambda_{2,1}^te_5\\
        \lambda_{2,0}^te_6+ (\lambda_{2,0}^t-\lambda_{2,1}^t)e_7  \\
        \lambda_{2,1}^te_7 \\
    \end{pmatrix}, A_{t,3}=\begin{pmatrix}
        \lambda_{3,0}^te_0\\
       \lambda_{3,0}^t e_1\\
       \lambda_{3,0}^t e_2\\
       \lambda_{3,0}^t e_3\\
        \lambda_{3,1}^te_4 \\
        \lambda_{3,1}^te_5 \\
        \lambda_{3,1}^te_6 \\
        \lambda_{3,1}^te_7 \\
      \end{pmatrix},~A_{t,4}=\begin{pmatrix}
        \lambda_{4,0}^te_0\\
       \lambda_{4,0}^t e_1 \\
       \lambda_{4,1}^t e_2 \\
       \lambda_{4,1}^t e_3   \\
        \lambda_{4,0}^te_4 \\
        \lambda_{4,0}^te_5 \\
        \lambda_{4,1}^te_6 \\
        \lambda_{4,1}^te_7 \\
      \end{pmatrix},~A_{t,5}=\begin{pmatrix}
        \lambda_{5,0}^te_0 \\
       \lambda_{5,1}^t e_1\\
       \lambda_{5,0}^t e_2\\
       \lambda_{5,1}^t e_3   \\
        \lambda_{5,0}^te_4\\
        \lambda_{5,1}^te_5\\
        \lambda_{5,0}^te_6\\
        \lambda_{5,1}^te_7 \\
      \end{pmatrix},
\end{equation*}
where
\vspace{-5mm}
\begin{align}
\nonumber & \lambda_{0,0}=1, \lambda_{1,0}=c^4, \lambda_{2,0}=c^8, \lambda_{3,0}=c^2, \lambda_{4,0}=c^6, \lambda_{5,0}=c^{10},\\
\label{Eqn_ex_lamb_C0}& \lambda_{0,1}=c, \lambda_{1,1}=c^5, \lambda_{2,1}=c^9, \lambda_{3,1}=c^3, \lambda_{4,1}=c^7, \lambda_{5,1}=c^{11},
\end{align}
with $c$ being a primitive element in $\mathbf{F}_q$.

Suppose that Node $3$ fails and Node $0$ is not connected, we claim that Node $3$ can be repaired by connecting Nodes $1,2,4,5$ and downloading $(V_{0,0}+V_{0,1})\mathbf{f}_j$ (i.e., $(e_0+e_4)\mathbf{f}_{j}, (e_1+e_5)\mathbf{f}_{j}, (e_2+e_6)\mathbf{f}_{j}, (e_3+e_7)\mathbf{f}_{j}$) for $j=1,2,4,5$, and choose $S_{3,t}=V_{0,0}+V_{0,1}$ for $t=0,1,2$. Then, from \eqref{Eqn repairment gene}, we have
\begin{align}
\nonumber&   \begin{pmatrix}
           e_0+e_4 \\
       e_1+e_5\\
       e_2+e_6\\
       e_3+e_7  \\
        \lambda_{3,0}e_0+ \lambda_{3,1}e_4 \\
       \lambda_{3,0} e_1+\lambda_{3,1}e_5 \\
       \lambda_{3,0} e_2+ \lambda_{3,1}e_6 \\
       \lambda_{3,0} e_3  +\lambda_{3,1}e_7 \\
        \lambda_{3,0}^2e_0+ \lambda_{3,1}^2e_4 \\
       \lambda_{3,0}^2 e_1+\lambda_{3,1}^2e_5 \\
       \lambda_{3,0}^2 e_2+ \lambda_{3,1}^2e_6 \\
       \lambda_{3,0}^2 e_3  +\lambda_{3,1}^2e_7 \\
       \end{pmatrix}\mathbf{f}_3+ \begin{pmatrix}
           e_0+e_4 \\
       e_1+e_5\\
       e_2+e_6\\
       e_3+e_7  \\
             \lambda_{0,0}
             (e_0+e_4)\\
       \lambda_{0,0} (e_1+e_5)\\
              \lambda_{0,0} (e_2+e_6) \\
       \lambda_{0,0} (e_3+e_7)   \\
        \lambda_{0,0}^2(e_0+e_4) \\
       \lambda_{0,0}^2 (e_1+e_5)\\
              \lambda_{0,0}^2 (e_2+e_6) \\
       \lambda_{0,0}^2 (e_3+e_7)   \\
          \end{pmatrix}\mathbf{f}_0+\begin{pmatrix}
           e_0+e_4 \\
      e_1+e_5\\
       e_2+e_6\\
       e_3+e_7  \\
             \lambda_{1,0}(e_0+e_4)+(\lambda_{1,0}-\lambda_{1,1})(e_2+e_6) \\
     \lambda_{1,0}(e_1+e_5)+(\lambda_{1,0}-\lambda_{1,1})(e_3+e_7) \\
       \lambda_{1,1}(e_2+e_6)\\
       \lambda_{1,1}(e_3+e_7)\\
       \lambda_{1,0}^2(e_0+e_4)+(\lambda_{1,0}^2-\lambda_{1,1}^2)(e_2+e_6) \\
     \lambda_{1,0}^2(e_1+e_5)+(\lambda_{1,0}^2-\lambda_{1,1}^2)(e_3+e_7) \\
       \lambda_{1,1}^2(e_2+e_6)\\
       \lambda_{1,1}^2(e_3+e_7)
         \end{pmatrix}\mathbf{f}_1\\
\label{Eqn_ExC3}       =&-\begin{pmatrix}
           e_0+e_4 \\
      e_1+e_5\\
       e_2+e_6\\
       e_3+e_7  \\
             \lambda_{2,0}(e_0+e_4)+(\lambda_{2,0}-\lambda_{2,1})(e_1+e_5) \\
     \lambda_{2,1}(e_1+e_5)\\
       \lambda_{2,0}(e_2+e_6)+(\lambda_{2,0}-\lambda_{2,1})(e_3+e_7) \\
       \lambda_{2,1}(e_3+e_7)\\
       \lambda_{2,0}^2(e_0+e_4)+(\lambda_{2,0}^2-\lambda_{2,1}^2)(e_1+e_5) \\
     \lambda_{2,1}^2(e_1+e_5)\\
       \lambda_{2,0}^2(e_2+e_6)+(\lambda_{2,0}^2-\lambda_{2,1}^2)(e_3+e_7) \\
       \lambda_{2,1}^2(e_3+e_7)
           \end{pmatrix}\mathbf{f}_2
        -
         \begin{pmatrix}
           e_0+e_4 \\
      e_1+e_5\\
       e_2+e_6\\
       e_3+e_7  \\
        \lambda_{4,0}(e_0+e_4) \\
      \lambda_{4,0}(e_1+e_5)\\
       \lambda_{4,1}(e_2+e_6)\\
       \lambda_{4,1}(e_3+e_7)  \\
       \lambda_{4,0}^2(e_0+e_4) \\
      \lambda_{4,0}^2(e_1+e_5)\\
       \lambda_{4,1}^2(e_2+e_6)\\
       \lambda_{4,1}^2(e_3+e_7) 
          \end{pmatrix}\mathbf{f}_4-\begin{pmatrix}
           e_0+e_4 \\
      e_1+e_5\\
       e_2+e_6\\
       e_3+e_7  \\
        \lambda_{5,0}(e_0+e_4) \\
      \lambda_{5,1}(e_1+e_5)\\
       \lambda_{5,0}(e_2+e_6)\\
       \lambda_{5,1}(e_3+e_7)  \\
       \lambda_{5,0}^2(e_0+e_4) \\
      \lambda_{5,1}^2(e_1+e_5)\\
       \lambda_{5,0}^2(e_2+e_6)\\
       \lambda_{5,1}^2(e_3+e_7)  
      \end{pmatrix}\mathbf{f}_5,
\end{align}
which can be reformulated as
\begin{equation}\label{Eqn_ex_coef_M}
\underbrace{\begin{pmatrix}
I_4 & I_4 & I_4\\
\lambda_{3,0}I_4 & \lambda_{3,1}I_4 &\lambda_{0,0}I_4 \\ 
\lambda_{3,0}^2I_4 & \lambda_{3,1}^2I_4 &\lambda_{0,0}^2I_4 \\ 
\end{pmatrix}}_{M}\begin{pmatrix}
V_{0,0}\mathbf{f}_3\\
V_{0,1}\mathbf{f}_3\\
(V_{0,0}+V_{0,1})\mathbf{f}_0
\end{pmatrix}=\kappa_*,
\end{equation}
where $\kappa_*$ denotes
the data related to $\mathbf{f}_1,\mathbf{f}_2,\mathbf{f}_4,\mathbf{f}_5$ in \eqref{Eqn_ExC3} and can be determined from the downloaded data.

Using Lemma \ref{lem the important for the proof of this section} and \eqref{Eqn_ex_lamb_C0}, we can see that the matrix $M$ in \eqref{Eqn_ex_coef_M} is non-singular. Therefore, we can solve \eqref{Eqn_ex_coef_M} to obtain $V_{0,0}\mathbf{f}_3$ and $V_{0,1}\mathbf{f}_3$ (i.e., $\mathbf{f}_3$) and regenerate the lost data.

\end{Example}

In Example \ref{ex_C0}, it is obvious to see that all the matrices $A_{t,i}$, $t\in [0:3),i\in [0:6)$ are upper triangular. The situation also holds for the general case (cf. \eqref{Eqn the PCM of code C0}). Therefore, the MDS property can be easily verified according to Lemma \ref{lem the important for the proof of this section}. 
In the following, we formally analyze the MDS property of the new code $\mathcal{C}_1$.
\begin{Theorem}\label{Thr C0 is MDS}
The new code $\mathcal{C}_1$ is an MDS array code if
\begin{itemize}
 \item [i)] $\lambda_{i,u}\ne \lambda_{j,v}$ for $u, v\in [0: w)$ and $i,j\in [0: n)$ with $i\not\equiv j\bmod m$,
\item [ii)] $\lambda_{i,u}\ne \lambda_{i+m,u}$  for $u\in [0: w)$ and $i\in [0: m)$.
\end{itemize} 
\end{Theorem}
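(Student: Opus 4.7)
The plan is to reduce the MDS verification to a direct application of Lemma~\ref{lem the important for the proof of this section}. Recall that $\mathcal{C}_1$ is MDS iff every $r\times r$ block sub-matrix $(A_{t,i})_{t\in[0:r),\,i\in J}$, for every $r$-subset $J\subseteq[0:n)$, is non-singular. Since Lemma~\ref{lem the important for the proof of this section} gives a clean criterion in the upper-triangular case, I would argue that after re-indexing the columns of $J$ as $[0:r)$, the block sub-matrix meets the three requirements of that lemma.

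First I would verify that each $A_{t,i}$ is upper triangular. For $i\in[m:n)$ the matrix is diagonal by definition, so this is immediate. For $i\in[0:m)$, the only off-diagonal entries of $A_{t,i}$ occur at positions $(a,a(i,u))$ with $a_i=0$ and $u\in[1:w)$; writing the $w$-ary expansions shows $a(i,u)=a+w^{m-1-i}u>a$, so these entries sit strictly above the diagonal. Next I would read off the diagonal entries: for any $i\in[0:n)$ and any $a\in[0:N)$,
\begin{equation*}
A_{t,i}[a,a]=\lambda_{i,\,a_{i\bmod m}}^{\,t}=(A_{1,i}[a,a])^{t},
\end{equation*}
which is exactly condition~i) of Lemma~\ref{lem the important for the proof of this section}. (Here $a_{i\bmod m}$ means $a_i$ for $i<m$ and $a_{i-m}$ for $i\geq m$.)

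The substantive step is to verify condition~ii) of Lemma~\ref{lem the important for the proof of this section}, namely that for any fixed $a\in[0:N)$ and any two distinct indices $i,j\in J$ one has $A_{1,i}[a,a]\neq A_{1,j}[a,a]$, i.e.\ $\lambda_{i,a_{i\bmod m}}\neq\lambda_{j,a_{j\bmod m}}$. I would split this into two cases according to whether $i\equiv j\pmod m$. When $i\not\equiv j\pmod m$, the digits $a_{i\bmod m}$ and $a_{j\bmod m}$ are (possibly distinct) elements of $[0:w)$, and hypothesis~i) of the theorem—$\lambda_{i,u}\neq\lambda_{j,v}$ for all $u,v\in[0:w)$—directly gives the inequality. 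When $i\equiv j\pmod m$, the pair must be $\{i',i'+m\}$ for some $i'\in[0:m)$, and crucially both diagonal entries see the \emph{same} digit $a_{i'}$; here hypothesis~ii) of the theorem, $\lambda_{i',u}\neq\lambda_{i'+m,u}$ for every $u\in[0:w)$, is exactly what is needed.

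This last case is the main obstacle and motivates the split in the hypotheses: the ``paired'' indices $i'$ and $i'+m$ correspond to a YB-like column and a CB-like column that share the same digit position in the $w$-ary index, so condition~i) of the theorem—which concerns different digit positions—cannot cover them, and condition~ii) is tailored for precisely this collision. Once all three requirements of Lemma~\ref{lem the important for the proof of this section} are in place, the lemma yields non-singularity of the chosen $r\times r$ block sub-matrix, and since $J$ was arbitrary the code $\mathcal{C}_1$ is MDS.
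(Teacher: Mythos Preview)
Your proposal is correct and follows essentially the same approach as the paper: both verify that each $A_{t,i}$ is upper triangular, read off the diagonal as $A_{t,i}[a,a]=\lambda_{i,a_{i\bmod m}}^t$, and then split the distinctness check $A_{1,i}[a,a]\neq A_{1,j}[a,a]$ into the cases $i\not\equiv j\bmod m$ (handled by hypothesis~i)) and $i\equiv j\bmod m$ (handled by hypothesis~ii)), before invoking Lemma~\ref{lem the important for the proof of this section}. Your explicit justification that $a(i,u)=a+w^{m-1-i}u>a$ for the off-diagonal positions is a slightly more detailed version of what the paper does via the entrywise computation in \eqref{Eqn the value of A(a,b)}.
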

\begin{proof}
It suffices to prove that for any pairwise distinct $j_0,j_1,\ldots,j_{r-1}\in [0: n)$, the block matrix 
  \begin{equation}\label{Eqn_rrsubmC0}
   \begin{pmatrix}
          A_{0,j_0} &  A_{0,j_1} & \cdots & A_{0,j_{r-1}}\\
           A_{1,j_0} &  A_{1,j_1} & \cdots & A_{1,j_{r-1}}\\
           \vdots & \vdots & \vdots & \vdots\\
           A_{r-1,j_0} &  A_{r-1,j_1} & \cdots & A_{r-1,j_{r-1}}\\
      \end{pmatrix} 
  \end{equation}
  is non-singular over $\mathbf{F}_{q}$. 

 For any $a,b\in [0: N)$, $i\in [0:n)$ and $t\in [0: r)$, according to \eqref{Eqn the PCM of code C0}, we have
  \begin{eqnarray}\label{Eqn the value of A(a,b)}
    A_{t,i}[a,b]&=&e_aA_{t,i}e_b^\top\notag\\
    &=&\left\{\begin{array}{ll}
        e_a\left(\sum\limits_{z=0}^{N-1}\lambda_{i,z_i}^t e_z^\top e_z+\sum\limits_{z=0,z_i=0}^{N-1}\sum\limits_{u=1}^{w-1}(\lambda_{i,0}^t-\lambda_{i,u}^t)e_z^\top e_{z(i,u)}\right)e_b^\top,  & \textrm{if~}i\in [0: m),  \\
        e_a\left(\sum\limits_{z=0}^{N-1}\lambda_{i,z_{i-m}}^t e_z^\top e_z\right)e_b^\top, & \textrm{if~}i\in [m: n), \\
    \end{array}\right.\notag\\
        &=&\left\{\begin{array}{ll}
        \lambda_{i,a_i}^t e_a e_b^\top+\left(e_a\sum\limits_{z=0,z_i=0}^{N-1}\sum\limits_{u=1}^{w-1}(\lambda_{i,0}^t-\lambda_{i,u}^t)e_z^\top e_{z(i,u)}\right)e_b^\top,  & \textrm{if~}i\in [0: m), \\
        \lambda_{i,a_{i-m}}^te_ae_b^\top,  & \textrm{if~}i\in [m: n),
    \end{array}\right.\notag\\
 &=&\left\{\begin{array}{ll}
        \lambda_{i,a_i}^t,  & \textrm{if~}i\in [0: m), \textrm{ and } b=a,\\
        \lambda_{i,0}^t-\lambda_{i,u}^t, & \textrm{if~}i\in [0: m), a_i=0,\textrm{ and }  b=a(i,u) \textrm{~for~} u=1,2,\ldots,w-1, \\
        \lambda_{i,a_{i-m}}^t,  & \textrm{if~}i\in [m: n) \textrm{~and~}b=a,\\
        0,  & \textrm{otherwise,}
    \end{array}\right.
  \end{eqnarray}
which implies that $A_{t,i}[a,b]=0$ for $0\le b<a<N$ (i.e., $A_{t,i}$ is an upper triangular matrix) and
\begin{eqnarray}\label{Eqn the first useful equation for proving Thm 9}
  	A_{t,i}[a,a]=\lambda_{i,a_{i\%m}}^t=(A_{1,i}[a,a])^t \textrm{~for~} a\in [0: N),
  \end{eqnarray}
where $\%$ denotes the modulo operation, $t\in [0: r)$, and $i\in [0: n)$. {This implies that i) of Lemma \ref{lem the important for the proof of this section} holds for the matrix in \eqref{Eqn_rrsubmC0}.}

For any $t\in [0: r),a\in [0: N)$ and $0\le i< j<n$, by \eqref{Eqn the first useful equation for proving Thm 9}, 
we have 
\begin{equation*}
A_{1,i}[a,a]-A_{1,j}[a,a]=\left\{\begin{array}{ll}
       \lambda_{i,a_{i\%m}}-\lambda_{j,a_{j\%m}},  &\textrm{if~}i\not\equiv j \bmod m, \\
         \lambda_{i,a_{i\%m}}-\lambda_{j,a_{i\%m}},  & \textrm{otherwise,}
    \end{array}\right.  
\end{equation*}
 which together with i) and ii) implies $A_{1,i}[a,a]-A_{1,j}[a,a]\ne 0$, {i.e., ii) of Lemma \ref{lem the important for the proof of this section} holds for  the matrix in \eqref{Eqn_rrsubmC0}}.
Finally, applying Lemma \ref{lem the important for the proof of this section}, we claim that the matrix in \eqref{Eqn_rrsubmC0} is non-singular, and then we reach the desired result.
\end{proof}

Analyzing the repair property requires that \eqref{Eqn repairment gene} is solvable based on the downloaded data. Thus it is helpful to 
{characterize} the product of $S_{i,t}$ and $A_{t,j}$ beforehand. 
\begin{Lemma}\label{lem the form between S and A}
  For any $i,j\in [0:n)$,  rewrite them as
$i=g_0m+i'$ and $j=g_1m+j'$ for $g_0,g_1\in \{0,1\}$ and $i',j'\in [0: m)$. Then for $t\in [0:r)$, we have 
  \begin{itemize}
      \item [i)] $S_{i,t}A_{t,i}=\left\{\begin{array}{ll}
\lambda_{i,0}^t V_{i,0}+(\lambda_{i,0}^t-\lambda_{i,1}^t)V_{i,1}+\cdots+(\lambda_{i,0}^t-\lambda_{i,w-1}^t) V_{i,w-1}, &\textrm{if~} i\in [0: m),\\[4pt]
\lambda_{i,0}^t V_{i,0}+\lambda_{i,1}^t V_{i,1}+\cdots+\lambda_{i,w-1}^t V_{i,w-1}, &\textrm{if~} i\in [m:2m),      
 \end{array}\right.$
      \item [ii)] $S_{i,t}A_{t,j}=B_{t,j,i}R_{i,j}$ for $j\ne i$, where $B_{t,j,i}$ is an $\frac{N}{w}\times \frac{N}{w}$ matrix define by
      \begin{equation}\label{Eqn the bmatrix bar{B}}
        B_{t,j,i}=\left\{\begin{array}{ll}
           \sum\limits_{a=0}^{N/w-1}\lambda_{j,a_j}^t(e_a^{(N/w)})^\top e_a^{(N/w)}\\ \hspace{1cm}+\sum\limits_{a=0,a_j=0}^{N/w-1}\sum\limits_{u=1}^{w-1}(\lambda_{j,0}^t-\lambda_{j,u}^t)(e_a^{(N/w)})^\top e_{a(j,u)}^{(N/w)},  & \textrm{if~}j\in [0: i'),  \\
           \sum\limits_{a=0}^{N/w-1}\lambda_{j,a_{j-1}}^t(e_a^{(N/w)})^\top e_a^{(N/w)}\\ \hspace{1cm}+\sum\limits_{a=0,a_{j-1}=0}^{N/w-1}\sum\limits_{u=1}^{w-1}(\lambda_{j,0}^t-\lambda_{j,u}^t)(e_a^{(N/w)})^\top e_{a(j-1,u)}^{(N/w)},  & \textrm{if~}j\in [i'+1: m),  \\
       \sum\limits_{a=0}^{N/w-1}\lambda_{j,a_{j-m}}^t(e_a^{(N/w)})^\top e_a^{(N/w)},  & \textrm{if~}j\in [m: m+i'),  \\
           \sum\limits_{a=0}^{N/w-1}\lambda_{j,a_{j-m-1}}^t(e_a^{(N/w)})^\top e_a^{(N/w)},  & \textrm{if~}j\in [m+i'+1: n),  \\[6pt] 
           \lambda_{j,0}^t I_{N/w}, & \textrm{if~}j\equiv i\bmod m,
        \end{array}\right.
      \end{equation}
  \end{itemize}
  where $e_0^{(N/w)},e_1^{(N/w)},\ldots,e_{N/w-1}^{(N/w)}$ are the standard basis of $\mathbf{F}_{q}^{N/w}$.
 
 \item [iii)] For the matrix in \eqref{Eqn the bmatrix bar{B}}, we have
    \begin{eqnarray}\label{Eqn the first eq for proving Lem 10}
    	B_{t,j,i}[a,b]=0 \textrm{ for any } t\in [0,r),0\le b<a<N/w,
    \end{eqnarray}
    and
   \begin{eqnarray}\label{Eqn the second eq for proving Lem 10}
    	B_{t,j,i}[a,a]=\left\{\begin{array}{ll}
    		\lambda_{j,a_{j'}}^t, & \textrm{if }j'<i',\\
    		\lambda_{j,0}^t, & \textrm{if }j'= i',\\
    		\lambda_{j,a_{j'-1}}^t, & \textrm{if }j' >i',		
    	\end{array}\right.
    \end{eqnarray}
    for any $a\in [0: N/w)$.  
\end{Lemma}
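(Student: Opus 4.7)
My plan is to prove all three parts by a direct row-by-row computation, leveraging the row representation \eqref{Eqn re Vu[a] for liu} of the matrices $V_{i',u}$: the $a$-th row of $V_{i',u}$ is simply $e_{g_{i',u}(a)}$, so the $a$-th row of $V_{i',u}A_{t,j}$ is the $g_{i',u}(a)$-th row of $A_{t,j}$. Combined with the closed form \eqref{Eqn the PCM of code C0} for $A_{t,j}$, this reduces the argument to tracking how the index $g_{i',u}(a)$ is manipulated, via the identities \eqref{Eqn ga b} and \eqref{Eqn ga b (j,v)}. Throughout, I would write $i=g_0m+i'$ and $j=g_1m+j'$ as in the statement, and set $j''=j$ if $j\in[0:m)$, $j''=j-m$ if $j\in[m:n)$.

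For part (i), take $j=i$. If $i\in[0:m)$, then $(g_{i,0}(a))_i=0$, so the $g_{i,0}(a)$-th row of $A_{t,i}$ equals $\lambda_{i,0}^t e_{g_{i,0}(a)}+\sum_{u=1}^{w-1}(\lambda_{i,0}^t-\lambda_{i,u}^t)e_{g_{i,0}(a)(i,u)}$, and by the middle branch of \eqref{Eqn ga b (j,v)} we have $g_{i,0}(a)(i,u)=g_{i,u}(a)$. Collecting over $a$ yields the first branch of (i). If $i\in[m:n)$, then $A_{t,i}$ is diagonal and $(g_{i-m,u}(a))_{i-m}=u$, so the $a$-th row of $V_{i-m,u}A_{t,i}$ is $\lambda_{i,u}^t e_{g_{i-m,u}(a)}$; summing over $u$ gives the second branch.

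For part (ii), the same method shows that the $a$-th row of $V_{i',u}A_{t,j}$ equals $\lambda_{j,(g_{i',u}(a))_{j''}}^t e_{g_{i',u}(a)}$, plus, only when $j\in[0:m)$ and $(g_{i',u}(a))_j=0$, an off-diagonal term $\sum_{v=1}^{w-1}(\lambda_{j,0}^t-\lambda_{j,v}^t)e_{g_{i',u}(a)(j,v)}$. Evaluating with \eqref{Eqn ga b} and \eqref{Eqn ga b (j,v)} splits into three sub-cases according to whether $j''<i'$, $j''>i'$, or $j''=i'$. When $j''<i'$ the diagonal coefficient is $\lambda_{j,a_{j''}}^t$ (independent of $u$) and, when $j<m$, the off-diagonal displacement becomes $g_{i',u}(a(j,v))$; when $j''>i'$ the analogous expressions involve $a_{j''-1}$ and $g_{i',u}(a(j-1,v))$. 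In either sub-case, whether one sums over $u$ (when $i\in[m:n)$) or not (when $i\in[0:m)$), the result factors as $B_{t,j,i}[a,:]R_{i,j}$ with $B_{t,j,i}$ exactly as in the first four branches of \eqref{Eqn the bmatrix bar{B}}. The sub-case $j''=i'$ corresponds to $j\equiv i\bmod m$: if $i\in[0:m)$ (so $j\in[m:n)$), $A_{t,j}$ is diagonal and $(g_{i,0}(a))_i=0$ gives $S_{i,t}A_{t,j}=\lambda_{j,0}^t R_{i,j}$ directly; if $i\in[m:n)$ (so $j\in[0:m)$), the diagonal coefficient becomes $\lambda_{j,u}^t$ and the off-diagonal displacement becomes $g_{i',0}(a)(i',v)=g_{i',v}(a)$ (only when $u=0$), and the telescoping identity $\lambda_{j,0}^t e_{g_{i',0}(a)}+\sum_{v=1}^{w-1}(\lambda_{j,0}^t-\lambda_{j,v}^t)e_{g_{i',v}(a)}+\sum_{u=1}^{w-1}\lambda_{j,u}^t e_{g_{i',u}(a)}=\lambda_{j,0}^t\sum_{u=0}^{w-1}e_{g_{i',u}(a)}$ yields $B_{t,j,i}=\lambda_{j,0}^tI_{N/w}$, matching the final branch of \eqref{Eqn the bmatrix bar{B}}.

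Part (iii) is a direct inspection of \eqref{Eqn the bmatrix bar{B}}: the last three branches are diagonal by definition, while in the first two branches the off-diagonal entries sit only at columns $a(j,v)$ or $a(j-1,v)$ with $v\geq 1$, conditional on the relevant digit of $a$ being $0$, which forces those column indices to strictly exceed $a$, proving \eqref{Eqn the first eq for proving Lem 10}. Reading off the coefficients of $(e_a^{(N/w)})^\top e_a^{(N/w)}$ from each branch of \eqref{Eqn the bmatrix bar{B}} immediately yields \eqref{Eqn the second eq for proving Lem 10}. The main obstacle is organizational rather than conceptual: several sub-cases must be threaded through the index identities \eqref{Eqn ga b} and \eqref{Eqn ga b (j,v)}, and the only genuine cancellation, namely the telescoping in the $j\equiv i\bmod m$ and $i\in[m:n)$ case, must be executed carefully since it is the unique place where contributions from different values of $u$ combine nontrivially.
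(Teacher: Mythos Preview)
Your proposal is correct and follows essentially the same approach as the paper: a row-by-row computation of $V_{i',u}[a,:]A_{t,j}$ via \eqref{Eqn re Vu[a] for liu}, followed by case analysis on the relative position of $j''$ and $i'$ using the digit identities \eqref{Eqn ga b} and \eqref{Eqn ga b (j,v)}. The paper only writes out the case $i\in[0:m)$ and dismisses $i\in[m:n)$ as similar, whereas you explicitly work through both and correctly isolate the telescoping cancellation in the $i\in[m:n)$, $j\equiv i\bmod m$ sub-case---this is the one place where the ``similarly'' in the paper hides a genuinely different (though still straightforward) computation, so your treatment is slightly more complete.
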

\begin{proof}
The proof is given in Appendix \ref{sec:C0repair}.
\end{proof}

With this lemma, we can now analyze the repair property according to \eqref{Eqn repairment gene}.
\begin{Theorem}\label{Thm_C0 repair}
The new code $\mathcal{C}_1$ the optimal repair bandwidth with repair degree $d=k+w-1$ if
\begin{itemize}
\item [i)] $\lambda_{i,u}\ne \lambda_{i,v}$  for $u, v\in [0: w)$ with $u\ne v$ and $i\in [0: n)$, 
\item [ii)] $\lambda_{i,u}\ne \lambda_{j,v}$ for $u, v\in [0: w)$ and $i,j\in [0: n)$ with $i\not\equiv j\bmod m$,
\item [iii)] {If $w<r$}, $\lambda_{i,0}\ne \lambda_{i+m,u}$ and  $\lambda_{i,u}\ne \lambda_{i+m,0}$ for $u\in [0: w)$ and $i\in [0: m)$.
\end{itemize}  
\end{Theorem}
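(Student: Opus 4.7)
The plan is to show that \eqref{Eqn repairment gene} admits a unique solution for $\mathbf{f}_i$ using only the downloaded data $\{R_{i,j}\mathbf{f}_j : j\in H_i\}$; since $\mathrm{rank}(R_{i,j}) = N/w$ for every helper and $d = k+w-1$, the total download is $dN/w = dN/(d-k+1)$, meeting the cut-set bound. The first step is interference cancellation: by part ii) of Lemma \ref{lem the form between S and A}, $S_{i,t}A_{t,j}\mathbf{f}_j = B_{t,j,i}(R_{i,j}\mathbf{f}_j)$ for every $j\ne i$ and every $t\in[0: r)$, so the contributions from helpers are linear functions of the downloaded data and can be subtracted from \eqref{Eqn repairment gene}. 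After cancellation the $t$-th parity-check group reduces to
\[S_{i,t}A_{t,i}\mathbf{f}_i + \sum_{l\in L_i} B_{t,l,i}R_{i,l}\mathbf{f}_l = \kappa_t,\qquad t\in[0: r),\]
with $\kappa_t$ computable from the downloaded data and the common value of $R_{i,l}$ (which, by the construction, depends only on $i$).

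Next, apply part i) of Lemma \ref{lem the form between S and A} to expand $S_{i,t}A_{t,i}$ as an $\mathbf{F}_q$-linear combination of $V_{i,0},\ldots,V_{i,w-1}$. This produces $w+|L_i|=r$ block unknowns of length $N/w$, namely $\{V_{i,u}\mathbf{f}_i\}_{u=0}^{w-1}$ and $\{R_{i,l}\mathbf{f}_l\}_{l\in L_i}$, matched by the $r$ groups of $N/w$ equations. For $i\in[m: n)$ the coefficient of $V_{i,u}\mathbf{f}_i$ in group $t$ is already the scalar block $\lambda_{i,u}^t I_{N/w}$. For $i\in[0: m)$ the coefficients are $\lambda_{i,0}^t$ on $V_{i,0}\mathbf{f}_i$ and $\lambda_{i,0}^t-\lambda_{i,u}^t$ on $V_{i,u}\mathbf{f}_i$ for $u\ge 1$; the invertible substitution $\tilde z_0 := \sum_{u=0}^{w-1} V_{i,u}\mathbf{f}_i$ and $\tilde z_u := -V_{i,u}\mathbf{f}_i$ for $u\ge 1$ rewrites them in the same scalar-block form $\lambda_{i,u}^t I_{N/w}$. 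In either case the resulting $rN/w\times rN/w$ coefficient matrix $\mathcal{M}$ has scaled identity blocks in its first $w$ block columns and the upper triangular blocks $B_{t,l,i}$ in its remaining $r-w$ block columns.

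It remains to prove $\mathcal{M}$ is non-singular, which I will deduce from Lemma \ref{lem the important for the proof of this section}. Part iii) of Lemma \ref{lem the form between S and A} shows each block of $\mathcal{M}$ is upper triangular with $B_{t,l,i}[a,a] = (B_{1,l,i}[a,a])^t$, so hypothesis i) of Lemma \ref{lem the important for the proof of this section} is immediate. Hypothesis ii) reduces to checking, at each row position $a\in[0: N/w)$, the pairwise distinctness of the $r$ diagonal values $\{\lambda_{i,u}\}_{u=0}^{w-1}$ coming from the first $w$ block columns together with $\{B_{1,l,i}[a,a]\}_{l\in L_i}$, each of the latter equal to one of $\lambda_{l,0}$, $\lambda_{l,a_{l'}}$, or $\lambda_{l,a_{l'-1}}$ according to the three-case split in Lemma \ref{lem the form between S and A} (iii). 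This is handled by a case analysis: (a) coincidences inside $\{\lambda_{i,u}\}$ are excluded by condition i) of the theorem; (b) coincidences between indices whose residues modulo $m$ differ are excluded by condition ii); and (c) in the remaining paired case $l\equiv i\bmod m$, i.e.\ $l = i\pm m\in L_i$, Lemma \ref{lem the form between S and A} forces $B_{1,l,i}[a,a] = \lambda_{l,0}$, and condition iii) supplies the required inequality against every $\lambda_{i,u}$. Lemma \ref{lem the important for the proof of this section} then yields non-singularity of $\mathcal{M}$, so every $V_{i,u}\mathbf{f}_i$ is recovered and $\mathbf{f}_i$ regenerated. The main obstacle is precisely this distinctness analysis: one has to carefully track, for each $l\in L_i$, which digit of $a$ is pulled into $B_{1,l,i}[a,a]$ by the three-branch formula and check that the paired case (the only one left unaddressed by conditions i) and ii)) always forces the shifted $0$-index so that condition iii) suffices.
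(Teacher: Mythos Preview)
Your approach is essentially the paper's: use Lemma \ref{lem the form between S and A} to rewrite \eqref{Eqn repairment gene}, bring the first $w$ block columns into pure Vandermonde form, and then apply Lemma \ref{lem the important for the proof of this section}. Your change of variables $\tilde z_0=\sum_u V_{i,u}\mathbf{f}_i$, $\tilde z_u=-V_{i,u}\mathbf{f}_i$ is exactly the inverse of the elementary column operation the paper performs on the coefficient matrix to pass from $B$ to $B'$, so the two routes are equivalent.

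There is, however, a gap in your distinctness case analysis. Your cases (a)--(c) do not cover the situation where \emph{two} unconnected nodes $l,l'\in L_i$ satisfy $l\equiv l'\not\equiv i\bmod m$, i.e.\ $\{l,l'\}=\{j',j'+m\}$ for some $j'\ne i\%m$; this occurs whenever $|L_i|=r-w\ge 2$. By Lemma \ref{lem the form between S and A}-iii) both diagonals then read off the \emph{same} digit of $a$, giving $B_{1,l,i}[a,a]=\lambda_{j',v}$ and $B_{1,l',i}[a,a]=\lambda_{j'+m,v}$ with the common value $v=a_{j'}$ (or $a_{j'-1}$). Distinctness therefore requires $\lambda_{j',v}\ne\lambda_{j'+m,v}$ for every $v\in[0:w)$, and conditions i)--iii) of the theorem yield this only for $v=0$. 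The paper's proof glosses over the same point (it simply asserts that i)--iii) suffice), so you are not doing worse than the paper, but your explicit claim that (c) is ``the only one left unaddressed'' is not correct. Note that the missing inequality is precisely item ii) of Theorem~\ref{Thr C0 is MDS}, and the concrete choice of $\lambda$'s in Theorem~\ref{Thm_MSR_field} does satisfy it.
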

\begin{proof}
We consider the repair of node $i$ when $w<r$, where we only check $i\in [0:m)$ since the case $i\in [m:n)$ can be verified similarly. By Lemma \ref{lem the form between S and A}, we can express \eqref{Eqn repairment gene} as
\begin{equation}\label{Eqn_Thm_C0 repair}
   \left(\hspace{-1.5mm}\begin{array}{c}
            V_{i,0}\\  
	            \lambda_{i,0}V_{i,0}+\sum\limits_{t=1}^{w-1}(\lambda_{i,0}-\lambda_{i,t})V_{i,t}\\ 
            \vdots\\
            \lambda_{i,0}^{r-1}V_{i,0}+\sum\limits_{t=1}^{w-1}(\lambda_{i,0}^{r-1}-\lambda_{i,t}^{r-1})V_{i,t}\\ 
       \end{array}\hspace{-1.5mm}\right)\mathbf{f}_i+\sum\limits_{l\in L_i}
         \left(\hspace{-1.5mm}\begin{array}{c}
             B_{0,l,i}\\
             B_{1,l,i}\\
             \vdots\\
             B_{r-1,l,i}\\
          \end{array}\hspace{-1.5mm}\right)R_{i,l}\mathbf{f}_l+\sum\limits_{j\in H_i}
         \left(\hspace{-1.5mm}\begin{array}{c}
             B_{0,j,i}\\
             B_{1,j,i}\\
             \vdots\\
             B_{r-1,j,i}\\
          \end{array}\hspace{-1.5mm}\right)R_{i,j}\mathbf{f}_j=\mathbf{0},
\end{equation}
Let $L_i=\{l_0,l_1,\ldots,l_{r-w-1}\}$, substituting them into the above equations, we then have
\begin{eqnarray}\label{Eqn proving R3 of C0}
\nonumber  &&\hspace{-1.5mm}\underbrace{\left(\hspace{-1.8mm}\begin{array}{ccccccc}
    I_{N/w} &  0_{N/w} &\cdots & 0_{N/w} & B_{0,l_0,i} &\cdots & B_{0,l_{r-w-1},i}\\
    \lambda_{i,0}I_{N/w} & (\lambda_{i,0}-\lambda_{i,1})I_{N/w} & \cdots & (\lambda_{i,0}-\lambda_{i,w-1})I_{N/w} & B_{1,l_0,i} &\cdots& B_{1,l_{r-w-1},i}\\
    \vdots & \vdots & \ddots & \vdots & \vdots &\ddots & \vdots \\
    \lambda_{i,0}^{r-1}I_{N/w} &  (\lambda_{i,0}^{r-1}-\lambda_{i,1}^{r-1})I_{N/w} & \cdots & (\lambda_{i,0}^{r-1}-\lambda_{i,w-1}^{r-1})I_{N/w} & B_{r-1,l_0,i}&\cdots & B_{r-1,l_{r-w-1},i}\\
\end{array}\hspace{-1.8mm}\right)}_{\mathrm{B}}\hspace{-1.5mm}\\&&{\cdot} \left(\hspace{-1.8mm}\begin{array}{c}
   V_{i,0}\mathbf{f}_i\\
   \vdots\\
   V_{i,w-1}\mathbf{f}_i\\
   R_{i,l_0}\mathbf{f}_{l_0}\\
   \vdots\\
   R_{i,l_{r-w-1}}\mathbf{f}_{l_{r-w-1}}\\
\end{array}\hspace{-1.8mm}\right)=-\sum\limits_{j\in H_i}
         \left(\begin{array}{c}
             B_{0,j,i}\\
             B_{1,j,i}\\
             \vdots\\
             B_{r-1,j,i}\\
          \end{array}\right)R_{i,j}\mathbf{f}_j.
\end{eqnarray}
It is easy to see that the matrix $B$ can be converted to
\begin{equation}\label{Eqn_matB'}
B'=\left(\begin{array}{ccccccc}
    I_{N/w} & I_{N/w} &\cdots & I_{N/w} & B_{0,l_0,i} &\cdots & B_{0,l_{r-w-1},i}\\
    \lambda_{i,0}I_{N/w} & \lambda_{i,1}I_{N/w} & \cdots & \lambda_{i,w-1}I_{N/w} & B_{1,l_0,i} &\cdots& B_{1,l_{r-w-1},i}\\
    \vdots & \vdots & \ddots & \vdots & \vdots &\ddots & \vdots \\
    \lambda_{i,0}^{r-1}I_{N/w} &  \lambda_{i,1}^{r-1}I_{N/w} & \cdots & \lambda_{i,w-1}^{r-1}I_{N/w} & B_{r-1,l_0,i}&\cdots & B_{r-1,l_{r-w-1},i}\\
\end{array}\right)
\end{equation}
by elementary column {operations}.

By Lemma \ref{lem the form between S and A}-iii), we have that the matrices $B_{1,l_0,i}, \ldots, B_{1,l_{r-w-1},i}$ are upper triangular and
$B_{t,l_j,i}[a,a]=(B_{1,l_j,i}[a,a])^t{,}$ 
for $t\in [0: r)$, $j\in [0: r-w)$, and $a\in [0:N/w)$.
Similar to the proof of Theorem \ref{Thr C0 is MDS}, by  Lemma \ref{lem the important for the proof of this section} and \eqref{Eqn the second eq for proving Lem 10}, we easily have that the block matrix $B'$ in  \eqref{Eqn_matB'} is non-singular if
\begin{equation*}
\lambda_{i,0}, \lambda_{i,1}, \ldots, \lambda_{i,w-1}, B_{1,l_0,i}[a, a], \ldots, B_{1,l_{r-w-1},i}[a, a]{,}
\end{equation*}
are pairwise distinct for any $i\in [0: m)$, $l_0,\ldots,l_{r-w-1}\in L_i$, and $a\in [0: N)$, i.e.,
\begin{equation*}
\lambda_{i,0}, \lambda_{i,1}, \ldots, \lambda_{i,w-1}, B_{1,j,i}[a, a], j\in [0: n)\setminus\{i\}{,}
\end{equation*}
are pairwise distinct for any $i\in [0: m)$ and $a\in [0: N)$ since $L_i$ is an arbitrary $(r-w)$-subset of $[0: n)\setminus\{i\}$, which can be satisfied if i)-iii) hold according to \eqref{Eqn the second eq for proving Lem 10}. {Therefore, if conditions i)-iii) hold, then $B$ in \eqref{Eqn proving R3 of C0} is non-singular. As a result, we can solve for $V_{i,0}\mathbf{f}_i,\cdots,V_{i,w-1}\mathbf{f}_i$ (i.e., $\mathbf{f}_i$) and  $R_{i,l}\mathbf{f}_l,l\in L_i$, since the right side hand of \eqref{Eqn proving R3 of C0} is known from the downloaded data.}  

{When $w=r$, the proof is similar to the case $w<r$ instead that the condition in iii) is not needed by noting   $L_i=\emptyset$ in \eqref{Eqn_Thm_C0 repair}.}
\end{proof}

%

\begin{Theorem}\label{Thm_MSR_field}
The requirements in items i), ii) of Theorem \ref{Thr C0 is MDS} and i)--iii) of Theorem \ref{Thm_C0 repair} can be fulfilled by setting
 { \begin{eqnarray}\label{Eqn the coefficient lambda}
    \lambda_{i,u}=\left\{\begin{array}{ll}
c^{i(w+2)+u},& {\rm if~} w=2,\\
c^{i(w+1)+u},& {\rm if~} w\in [3:r), \\
c^{iw+u},& {\rm if~} w=r,
\end{array}\right.    \lambda_{i+m,u}=\left\{\begin{array}{ll}
c^{i(w+2)+w+u},& {\rm if~} w=2, \\
c^{i(w+1)+w},& {\rm if~} w\in [3:r), u=0, \\
c^{i(w+1)+u\% (w-1)+1},& {\rm if~} w\in [3:r), u\ge1, \\
c^{iw+(u+1)\% r},& {\rm if~} w=r,
\end{array}\right.
  \end{eqnarray}
 for $i\in [0:m)$ and $u\in [0:w)$,  }
{ where $c$ is a primitive element of 
$\mathbf{F}_q$} with
{\begin{equation*}
q>\left\{\begin{array}{ll}
m(w+2),& {\rm if~} w=2, \\
m(w+1),& {\rm if~} w\in [3:r), \\
mw,& {\rm if~} w=r. 
\end{array}\right.
\end{equation*}
}
\end{Theorem}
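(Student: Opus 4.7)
The plan is to reduce every inequality required by Theorems \ref{Thr C0 is MDS} and \ref{Thm_C0 repair} to a comparison of integer exponents, then verify these in each of the three regimes of \eqref{Eqn the coefficient lambda}. Since $c$ is a primitive element of $\mathbf{F}_q$, two powers $c^a,c^b$ coincide iff $a\equiv b\pmod{q-1}$; the stated bound on $q$ is precisely what guarantees that every exponent that appears in \eqref{Eqn the coefficient lambda} lies in $[0,q-1)$, so congruence modulo $q-1$ collapses to equality of integers.

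Next, for each fixed $i\in[0:m)$, the $2w$ exponents used by $\{\lambda_{i,u},\lambda_{i+m,u}:u\in[0:w)\}$ all lie in one consecutive integer block of length $w+2$ (when $w=2$), $w+1$ (when $w\in[3:r)$), or $w$ (when $w=r$), and the blocks for different $i$ are pairwise disjoint. This block disjointness immediately disposes of item (i) of Theorem \ref{Thr C0 is MDS} and item (ii) of Theorem \ref{Thm_C0 repair}, since these are the ``different residues modulo $m$'' conditions. It therefore remains to check item (ii) of Theorem \ref{Thr C0 is MDS} together with items (i) and (iii) of Theorem \ref{Thm_C0 repair} inside one block.

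The cases $w=2$ and $w=r$ are straightforward. When $w=2$, each block holds four exponents $\{4i,4i+1,4i+2,4i+3\}$ assigned injectively to $\lambda_{i,0},\lambda_{i,1},\lambda_{i+m,0},\lambda_{i+m,1}$, so every intra-block inequality is immediate. When $w=r$ (hence $r\ge3$, since $w=2$ is handled by the first rule), the exponents of $\lambda_{i+m,u}$ are obtained from those of $\lambda_{i,u}$ by the cyclic shift $u\mapsto(u+1)\bmod r$, which is a single $r$-cycle and hence a fixed-point-free bijection on $[0:r)$; this yields Theorem \ref{Thr C0 is MDS}(ii) (no fixed point) and Theorem \ref{Thm_C0 repair}(i) for nodes in $[m:n)$ (bijection), while Theorem \ref{Thm_C0 repair}(iii) is not required.

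The main work is the middle regime $w\in[3:r)$, where the $2w$ labels must be packed into only $w+1$ distinct exponents per block, so some collisions between $\{\lambda_{i,u}\}_u$ and $\{\lambda_{i+m,u}\}_u$ are forced and must be shown to avoid every required inequality. The construction places $\lambda_{i,0}$ and $\lambda_{i+m,0}$ at the endpoints $i(w+1)$ and $i(w+1)+w$ of the block, while for $u\in[1:w)$ the exponent of $\lambda_{i+m,u}$ is $i(w+1)+(u\bmod(w-1))+1$; this realizes the cyclic permutation $1\mapsto2,\,2\mapsto3,\,\ldots,\,w-2\mapsto w-1,\,w-1\mapsto1$ of $\{1,\ldots,w-1\}$, a single $(w-1)$-cycle with no fixed point since $w\ge3$. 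From this one reads off: Theorem \ref{Thr C0 is MDS}(ii) holds because the offset of $\lambda_{i+m,u}$ differs from that of $\lambda_{i,u}$ at every $u\in[0:w)$; Theorem \ref{Thm_C0 repair}(i) holds because this map is a bijection of $[1:w)$, so the $\lambda_{i+m,u}$'s are pairwise distinct and together with $\lambda_{i+m,0}$ fill $\{1,\ldots,w\}$ injectively; and Theorem \ref{Thm_C0 repair}(iii) holds because the two endpoint exponents $i(w+1)$ and $i(w+1)+w$ are occupied exclusively by $\lambda_{i,0}$ and $\lambda_{i+m,0}$, respectively, hence differ from every element on the opposite side. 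Combining the three regimes yields Theorem \ref{Thm_MSR_field}. The only mildly delicate step, and the one I expect to be the main obstacle, is verifying that the ``middle'' assignment $u\mapsto(u\bmod(w-1))+1$ is simultaneously a derangement relative to $u\mapsto u$ and a bijection on $[1:w)$ for every $w\ge3$; once this cycle-structure observation is in hand, everything else is bookkeeping on exponents.
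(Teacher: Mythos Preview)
Your proposal is correct and follows essentially the same approach as the paper: reduce each required inequality to a comparison of integer exponents of the primitive element $c$, use the field-size bound to confine all exponents to $[0,q-1)$ so that modular congruence becomes integer equality, and then verify the conditions case by case. The paper carries out the middle case $w\in[3:r)$ by direct exponent-difference computations (leaving the other regimes as similar), while you phrase the same verifications in terms of disjoint exponent blocks and the cycle structure of $u\mapsto (u\bmod(w-1))+1$; the underlying checks are identical.
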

\begin{proof}
We only verify the case $w\in [3:r)$, as the proofs for the remaining cases are similar.

For any $i,j \in[0: n)$ and $u,v\in [0: w)$ with $(i,u)\ne (j, v)$, rewrite $i=g_0m+i'$ and $j=g_1m+j'$, where $g_0,g_1\in \{0,1\}$ and $i',j'\in [0: m)$.
\begin{itemize}
\item [i)] When $i\not\equiv j\% m$, by \eqref{Eqn the coefficient lambda}, we have
$\lambda_{i,u}=c^{i'(w+1)+t}$  and $\lambda_{j,v}=c^{j'(w+1)+s}$
for some $t,s\in [0: w+1)$. Then 
$
 \lambda_{i,u}-\lambda_{j,v}=c^{i'(w+1)+t}(1-c^{(j'-i')(w+1)+s-t})\ne 0
$
since 
\begin{equation*}
  0<|(j'-i')(w+1)+s-t|\le (m-1)(w+1)+w=m(w+1)-1<q-1.
\end{equation*}
Therefore, i) of Theorem \ref{Thr C0 is MDS} and also ii) of Theorem \ref{Thm_C0 repair} are satisfied. 

\item [ii)] For $i\in [0: m)$, by \eqref{Eqn the coefficient lambda}, we have
\begin{equation*}
\lambda_{i+m,0}=c^{i(w+1)+w}=c^wc^{i(w+1)}=c^w\lambda_{i,0}\ne \lambda_{i,0},
\end{equation*}
\begin{equation*}
\lambda_{i+m,u}=c^{i(w+1)+u\% (w-1)+1}\ne c^{i(w+1)+u}=\lambda_{i,u} \mbox{~for~} u\ge1
\end{equation*}
since $u\% (w-1)+1\ne u$ for $u\in [1: w)$,
which shows that ii) of Theorem \ref{Thr C0 is MDS} is satisfied. 

\item [iii)] From \eqref{Eqn the coefficient lambda}, it is obvious to see $\lambda_{i,u}\ne \lambda_{i,v}$ for $u\ne v$,
 i.e., i) of Theorem \ref{Thm_C0 repair} is satisfied.  

\item [iv)] For $u\in [1: w)$ and $i\in [0: m)$, by \eqref{Eqn the coefficient lambda}, we have 
\begin{equation*}
\lambda_{i,0}-\lambda_{i+m,u}=c^{i(w+1)}-c^{i(w+1)+(u\% (w-1))+1}\ne 0,
\end{equation*}
since $c^{(u\% (w-1))+1}\ne 1$
and $
\lambda_{i,u}-\lambda_{i+m,0}=c^{i(w+1)+u}-c^{i(w+1)+w}\ne 0.$
Note that $\lambda_{i,0}\ne \lambda_{i+m,0}$ has been proved in ii),
thus, iii) of Theorem \ref{Thm_C0 repair} is satisfied.  
\end{itemize}
This completes the proof.
\end{proof}

\begin{Remark}
In Construction \ref{Con_C0}, we assumed that $2\mid n$ for the $(n, k)$ MSR code $\mathcal{C}_1$. {If $2\nmid n$, through shortening, one can easily obtain an $(n,k)$ MSR code with repair degree $d$ from an $(n+1,k+1)$ MSR code $\mathcal{C}_1$ with repair degree $d+1$ \cite[Theorem 6]{Rashmi2011optimal}.}
\end{Remark}

\begin{Remark}
When $w=r$, since Theorem \ref{Thm_C0 repair}-iii) is not needed to satisfy, then we can choose $\lambda_{i+m,0}, \lambda_{i+m,1}, \ldots, \lambda_{i+m,w-1}$ from the set $\{\lambda_{i,0}, \lambda_{i,1}, \ldots, \lambda_{i,w-1}\}$, which leads to a smaller finite field compared to the case $w<r$.    
\end{Remark}

%

\section{A new MDS array code $\mathcal{C}_2$ with small sub-packetization level}\label{sec:eMSRd=n-1} 
In \cite{eMSR_d_eq_n-1}, a generic transformation that can transform any $(n', k')$ MSR code into a new $(n=sn',k)$ MDS array code was proposed { for any $s\ge2$}, which can greatly reduce the sub-packetization level by sacrificing a bit repair bandwidth. Note that the sub-packetization level of the base code determines that of the new array code. Thus, it is desirable to choose an MSR code with a small sub-packetization level as the base code. The MSR code $\mathcal{C}_1$ in the previous section has a small sub-packetization and is suitable to serve as the base code.
In this section, by applying the generic transformation in \cite{eMSR_d_eq_n-1} to the $(n', k')$ code $\mathcal{C}_1$ with $d'=n'-1$ in the previous section, we construct an $(n=sn',k)$ MDS array code $\mathcal{C}_2$ with small sub-packetization level and $(1+\epsilon)$-optimal repair bandwidth, where the repair degree is $d=n-1$. 

\begin{Construction}\label{Con C3} Based on the generic transformation in \cite{eMSR_d_eq_n-1}, the new $(n,k)$ array code $\mathcal{C}_2$ is constructed throught two steps as follows.
\begin{itemize}
\item [] \textbf{Step 1.} Choosing the $(n', k')$ MSR code $\mathcal{C}_1$ with repair degree $d'=n'-1$ in Section \ref{sec:C3} as the base code. 
Let $(A_{t,i'})_{t\in [0: r),i'\in[0: n')}$,  $S'_{i',t}$, and $R'_{i',j'}$ denote its parity-check matrix, select matrices, and repair matrices, where $r=n'-k'$, $i',j'\in [0: n')$, $j'\ne i'$, and $t\in [0: r)$.  

\item  [] \textbf{Step 2.}
Applying the generic transformation in \cite{eMSR_d_eq_n-1} to the $(n', k')$ MSR code $\mathcal{C}_1$, 
then an $(n=sn',k)$ array code $\mathcal{C}_2$ with repair degree $d=n-1$ is obtained,  where the parity-check matrix $(A_{t,i})_{t\in [0: r),i\in[0: n)}$, select matrices $S_{i,t}$, and repair matrices $R_{i,j}$ are given as
\begin{equation}\label{Eqn_C3_codingM}
A_{t,i}=x_{t,i}A'_{t,i\%n'},~S_{i,t}=S'_{i\%n',t},
R_{i,j}=\left\{
\begin{array}{ll}
R'_{i\%n',j\%n'}, & \textrm{if}~j\not\equiv i \bmod n',\\
I, & \textrm{otherwise},
\end{array}
\right.
\end{equation} 
with $i,j\in [0:n)$, $j\ne i$, $t\in [0, r)$, $x_{t,i}\in \mathbf{F}_q\setminus\{0\}$, and again $\%$ denotes the modulo operation. 
\end{itemize}
\end{Construction}

\begin{Lemma}(\cite[Theorem 2]{eMSR_d_eq_n-1})\label{Lemma_dupli_tran}
Every failed node of the new $(n, k)$ array code  $\mathcal{C}_2$
obtained by the generic transformation can be regenerated by
the repair matrices defined in \eqref{Eqn_C3_codingM}, the repair bandwidth is $(1+\frac{(s-1)(r-1)}{n-1})\gamma_{\rm optimal}$, where $\gamma_{\rm optimal}=\frac{n-1}{r}N$ denotes the optimal repair bandwidth.
\end{Lemma}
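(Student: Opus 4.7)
The plan is to reduce the repair of a failed node $i$ in $\mathcal{C}_2$ to the repair of node $i'=i\bmod n'$ in the base code $\mathcal{C}_1$. First I would fix the notation: set $i'=i\bmod n'$ and let $G_i=\{i'+gn':g\in[0:s)\}$ be the ``same--residue'' group of size $s$. The $d=n-1$ helpers then split naturally into the $s-1$ nodes of $G_i\setminus\{i\}$, from which \eqref{Eqn_C3_codingM} prescribes $R_{i,j}=I_N$ (a full download), and the $s(n'-1)$ nodes of $[0:n)\setminus G_i$, from each of which we download only $R'_{i',j\bmod n'}\mathbf{f}_j$.

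Next, I would multiply the $t$-th parity-check group by $S_{i,t}=S'_{i',t}$ and substitute $A_{t,j}=x_{t,j}A'_{t,j\bmod n'}$; after isolating the failed symbol this yields, for each $t\in[0:r)$,
\begin{equation*}
x_{t,i}\, S'_{i',t} A'_{t,i'} \mathbf{f}_i
\;=\; -\sum_{j\in G_i\setminus\{i\}} x_{t,j}\, S'_{i',t} A'_{t,i'} \mathbf{f}_j
\;-\sum_{j\in[0:n)\setminus G_i} x_{t,j}\, S'_{i',t} A'_{t,j\bmod n'} \mathbf{f}_j.
\end{equation*}
Both right-hand sums are computable: the first because each $\mathbf{f}_j$ with $j\in G_i\setminus\{i\}$ has been downloaded in full, and the second because the MSR repair property of $\mathcal{C}_1$ — exhibited for the base code here by Lemma \ref{lem the form between S and A}-ii) — factors $S'_{i',t}A'_{t,j'}=B_{t,j',i'}R'_{i',j'}$ whenever $j'\neq i'$, so that $S'_{i',t}A'_{t,j\bmod n'}\mathbf{f}_j$ is reconstructible from the available $R'_{i',j\bmod n'}\mathbf{f}_j$. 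Since $x_{t,i}\neq 0$, dividing by it recovers the base code's ``useful data'' $S'_{i',t}A'_{t,i'}\mathbf{f}_i$ for every $t\in[0:r)$; and because $\mathcal{C}_1$ is MSR (Theorem \ref{Thm_C0 repair}), the vertically stacked matrix $\bigl(S'_{i',t}A'_{t,i'}\bigr)_{t\in[0:r)}$ has rank $N$, so $\mathbf{f}_i$ is uniquely determined.

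Finally, I would tally the bandwidth: the $s-1$ full downloads contribute $(s-1)N$, while the $s(n'-1)$ partial downloads contribute $s(n'-1)N/r$ since $\mathcal{C}_1$ attains the optimal repair bandwidth at $d'=n'-1$. Dividing the sum by $\gamma_{\rm optimal}=(n-1)N/r$ yields
\begin{equation*}
\frac{r(s-1)+s(n'-1)}{n-1}\;=\;\frac{(n-1)+(s-1)(r-1)}{n-1}\;=\;1+\frac{(s-1)(r-1)}{n-1},
\end{equation*}
as stated. The main obstacle is the interference-cancellation step: one must verify that the per-column scalar twists $x_{t,j}$ do not spoil the alignment that lets $R'_{i',j\bmod n'}\mathbf{f}_j$ carry all information needed about $S'_{i',t}A'_{t,j\bmod n'}\mathbf{f}_j$. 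This goes through precisely because each $x_{t,j}$ is a scalar (not a matrix), so it commutes through $S'_{i',t}$ and only rescales an already-known linear function of $\mathbf{f}_j$; without this commutativity the reduction to the base code's repair would fail.
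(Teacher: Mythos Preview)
The paper does not supply its own proof of this lemma: it is quoted verbatim as \cite[Theorem 2]{eMSR_d_eq_n-1} and used as a black box (indeed, Theorem~\ref{Thm_C3} simply appeals to Lemma~\ref{Lemma_dupli_tran} for the repair property). Your argument is correct and is precisely the standard reduction underlying that cited theorem: partition the helper set by residue modulo $n'$, download full columns from the $s-1$ same-residue nodes and the base-code repair slices $R'_{i',j\bmod n'}\mathbf{f}_j$ from the rest, then observe that the scalar twists $x_{t,j}$ commute through the select matrices so the interference alignment of the base code is preserved. Your bandwidth arithmetic is also correct, using that $w=r$ when $d'=n'-1$ so each partial download has size $N/r$. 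One small wording point: the claim that the stacked matrix $\bigl(S'_{i',t}A'_{t,i'}\bigr)_{t\in[0:r)}$ has rank $N$ is exactly the content of the $w=r$ case of Theorem~\ref{Thm_C0 repair} (where $L_i=\emptyset$); you cite that theorem, so this is fine, but it would be slightly cleaner to point directly to the explicit form in Lemma~\ref{lem the form between S and A}-i), from which the block-Vandermonde invertibility is immediate under condition i) of Theorem~\ref{Thm_C0 repair}.
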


\begin{Example}\label{ex_C3}
From Construction \ref{Con_C0}, the parity-check matrix $(A_{t,i})_{t\in [0:2),i\in [0:4)}$ of the $(n'=4, k'=2)$ MSR code $\mathcal{C}_1$ with $d'=3$ and $N=(d'-k'+1)^{n'/2}=2^2$ is given as
\begin{equation*}
A_{t,0}=\begin{psmallmatrix}
        \lambda_{0,0}^te_0+ (\lambda_{0,0}^t-\lambda_{0,1}^t)e_2 \\
       \lambda_{0,0}^t e_1+(\lambda_{0,0}^t-\lambda_{0,1}^t)e_3 \\
        \lambda_{0,1}^te_2 \\
        \lambda_{0,1}^te_3 
      \end{psmallmatrix}\hspace{-1mm}, A_{t,1}=\begin{psmallmatrix}
        \lambda_{1,0}^te_0+ (\lambda_{1,0}^t-\lambda_{1,1}^t)e_1 \\
       \lambda_{1,1}^t e_1\\
       \lambda_{1,0}^t e_2+(\lambda_{1,0}^t-\lambda_{1,1}^t)e_3  \\
       \lambda_{1,1}^t e_3   
      \end{psmallmatrix}\hspace{-1mm},
A_{t,2}=\begin{psmallmatrix}
        \lambda_{2,0}^te_0\\
       \lambda_{2,0}^t e_1\\
        \lambda_{2,1}^te_2 \\
        \lambda_{2,1}^te_3 \\
      \end{psmallmatrix}\hspace{-1mm},
 A_{t,3}=\begin{psmallmatrix}
        \lambda_{3,0}^te_0 \\
       \lambda_{3,1}^t e_1\\
       \lambda_{3,0}^t e_2\\
       \lambda_{3,1}^t e_3  
      \end{psmallmatrix}, 
      \end{equation*}
where $t\in [0: 2)$, $\lambda_{t,i}$, $t\in [0: 2)$,  $i\in [0: 4)$ are set according to \eqref{Eqn the coefficient lambda}. By setting $s=2$ in Construction \ref{Con C3}, we obtain an $(n=8, k=6)$ MDS array code $\mathcal{C}_2$ with $N=2^2$ and $d=7$, based on the $(n'=4, k'=2)$ MSR code $\mathcal{C}_1$ from Construction \ref{Con_C0}. The parity-check matrix $(A_{t,i})_{t\in [0:2),i\in [0:8)}$ of the new MDS array code  $\mathcal{C}_2$ is given as
\begin{equation*}
A_{t,0}=\begin{pmatrix}
        \lambda_{0,0}^te_0+ (\lambda_{0,0}^t-\lambda_{0,1}^t)e_2 \\
       \lambda_{0,0}^t e_1+(\lambda_{0,0}^t-\lambda_{0,1}^t)e_3 \\
        \lambda_{0,1}^te_2 \\
        \lambda_{0,1}^te_3 
      \end{pmatrix},~A_{t,1}=\begin{pmatrix}
        \lambda_{1,0}^te_0+ (\lambda_{1,0}^t-\lambda_{1,1}^t)e_1 \\
       \lambda_{1,1}^t e_1\\
       \lambda_{1,0}^t e_2+(\lambda_{1,0}^t-\lambda_{1,1}^t)e_3  \\
       \lambda_{1,1}^t e_3   
      \end{pmatrix},~A_{t,2}=\begin{pmatrix}
        \lambda_{2,0}^te_0\\
       \lambda_{2,0}^t e_1\\
        \lambda_{2,1}^te_2 \\
        \lambda_{2,1}^te_3 \\
      \end{pmatrix},
\end{equation*}
\begin{equation*}
 A_{t,3}=\begin{pmatrix}
        \lambda_{3,0}^te_0 \\
       \lambda_{3,1}^t e_1\\
       \lambda_{3,0}^t e_2\\
       \lambda_{3,1}^t e_3  
      \end{pmatrix},~A_{t,4}=c^{4t}\begin{pmatrix}
        \lambda_{0,0}^te_0+ (\lambda_{0,0}^t-\lambda_{0,1}^t)e_2 \\
       \lambda_{0,0}^t e_1+(\lambda_{0,0}^t-\lambda_{0,1}^t)e_3 \\
        \lambda_{0,1}^te_2 \\
        \lambda_{0,1}^te_3 
      \end{pmatrix},~A_{t,5}=c^{4t}\begin{pmatrix}
        \lambda_{1,0}^te_0+ (\lambda_{1,0}^t-\lambda_{1,1}^t)e_1 \\
       \lambda_{1,1}^t e_1\\
       \lambda_{1,0}^t e_2+(\lambda_{1,0}^t-\lambda_{1,1}^t)e_3  \\
       \lambda_{1,1}^t e_3   
      \end{pmatrix},
\end{equation*}
\begin{equation*}
 A_{t,6}=c^{4t}\begin{pmatrix}
        \lambda_{2,0}^te_0\\
       \lambda_{2,0}^t e_1\\
        \lambda_{2,1}^te_2 \\
        \lambda_{2,1}^te_3 \\
      \end{pmatrix},~ A_{t,7}=c^{4t}\begin{pmatrix}
        \lambda_{3,0}^te_0 \\
       \lambda_{3,1}^t e_1\\
       \lambda_{3,0}^t e_2\\
       \lambda_{3,1}^t e_3  
      \end{pmatrix},
\end{equation*}
where
\begin{eqnarray}\label{Eqn_ex_lamb}
 \lambda_{0,0}=1, \lambda_{1,0}=c^2,   \lambda_{2,0}=c, \lambda_{3,0}=c^3, 
 \lambda_{0,1}=c, \lambda_{1,1}=c^3,  \lambda_{2,1}=1, \lambda_{3,1}=c^2,
\end{eqnarray}
with $c$ being a primitive element in $\mathbf{F}_q$ where $q>8$.
\end{Example}


\begin{Theorem}\label{Thm_C3}
Setting $x_{t,i}$ in \eqref{Eqn_C3_codingM} as
\begin{equation}\label{Eqn C3 x assi}
x_{t,i}=x_{i}^t{,}
\end{equation}
for some $x_i\in \mathbf{F}_q\setminus\{0\}$, where $i\in [0:n)$ and $t\in [0: r)$, the code $\mathcal{C}_2$ in Construction \ref{Con C3} is an $(n=sn', k)$ MDS array code with repair degree $d=n-1$ over $\mathbf{F}_q$ and repair bandwidth $(1+\frac{(s-1)(r-1)}{n-1})\gamma_{\rm optimal}$, where $\gamma_{\rm optimal}=\frac{n-1}{r}N$,
if the following conditions i)--iii) hold
\begin{itemize}
\item [i)] $x_i\lambda_{i',u}\ne x_j\lambda_{j',u'}$ for $u, u'\in [0: r)$ and $i,j\in [0: n)$ with $i\not\equiv j\bmod m$,
\item [ii)] $x_i\lambda_{i',u}\ne x_j\lambda_{j',u}$ for $u\in [0: r)$ and $i,j\in [0: n)$ with $i\ne j$ and $i\equiv j\bmod m$,
\item [iii)] $\lambda_{i',u}\ne \lambda_{i',u'}$ for $u,u' \in [0:r)$ with $u\ne u'$ and $i'\in [0: n)$,
\end{itemize}
where $i'=i\%n'$ and $j'=j\%n'$.
\end{Theorem}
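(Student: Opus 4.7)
My plan is to split the theorem into two independent claims. The repair-bandwidth statement is immediate from Lemma \ref{Lemma_dupli_tran}: the generic transformation of \cite{eMSR_d_eq_n-1} preserves the optimal-repair property of its base code (here $\mathcal{C}_1$, which is an MSR code by Section \ref{sec:C3}) and inflates the optimal bandwidth by the factor $1+\frac{(s-1)(r-1)}{n-1}$. The substance of the proof therefore lies in establishing the MDS property of $\mathcal{C}_2$.

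For MDS, I would fix an arbitrary $r$-subset $\{j_0,\dots,j_{r-1}\}\subseteq[0:n)$ and consider the block matrix $M=(A_{t,j_l})_{t,l\in[0:r)}$. Writing $j_l'=j_l\%n'$, \eqref{Eqn_C3_codingM}--\eqref{Eqn C3 x assi} give $A_{t,j_l}=x_{j_l}^{t}A'_{t,j_l'}$. The proof of Theorem \ref{Thr C0 is MDS} already shows, via \eqref{Eqn the value of A(a,b)}--\eqref{Eqn the first useful equation for proving Thm 9}, that each $A'_{t,j_l'}$ is upper triangular with diagonal $\lambda_{j_l',a_{j_l'\%m}}^{t}$, so $A_{t,j_l}$ is itself upper triangular with
\[
A_{t,j_l}[a,a]=\bigl(x_{j_l}\lambda_{j_l',a_{j_l'\%m}}\bigr)^{t},\qquad a\in[0:N).
\]
This matches hypothesis i) of Lemma \ref{lem the important for the proof of this section} verbatim and reduces the MDS check to verifying hypothesis ii): for every fixed $a$, the scalars $\mu_l(a):=x_{j_l}\lambda_{j_l',a_{j_l'\%m}}$ are pairwise distinct in $l$.

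I would then split on whether $j_l\equiv j_{l'}\pmod m$ for $l\neq l'$. If not, condition (i) applied with $u=a_{j_l'\%m}$, $u'=a_{j_{l'}'\%m}$ (both lying in $[0:w)\subseteq[0:r)$) immediately yields $\mu_l(a)\neq\mu_{l'}(a)$. Otherwise $j_l\neq j_{l'}$ while $j_l\equiv j_{l'}\pmod m$; since $m\mid n'$ this forces $j_l'\equiv j_{l'}'\pmod m$, hence $a_{j_l'\%m}=a_{j_{l'}'\%m}=:u$, and condition (ii) applied with this common $u$ finishes the distinctness. Condition (iii) is used separately to ensure, via Theorem \ref{Thm_C0 repair}-i), that the base $\mathcal{C}_1$ really is an MSR code so that Lemma \ref{Lemma_dupli_tran} is applicable. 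The main subtlety I anticipate lies in the second case: the equivalence $j_l\equiv j_{l'}\pmod m$ with $j_l\neq j_{l'}$ arises either from $j_l\equiv j_{l'}\pmod{n'}$ (so $j_l'=j_{l'}'$, and (ii) collapses to $x_{j_l}\neq x_{j_{l'}}$) or from $\{j_l',j_{l'}'\}=\{i',i'+m\}$ for some $i'\in[0:m)$, and one must verify that the single condition (ii) captures both sub-cases. Once that bookkeeping is in place, Lemma \ref{lem the important for the proof of this section} immediately declares $M$ non-singular, completing the MDS verification.
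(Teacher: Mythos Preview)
Your proposal is correct and follows exactly the route the paper indicates: the paper's own proof simply states that the repair property follows from Lemma~\ref{Lemma_dupli_tran} and that the MDS verification is ``similar to that of Theorem~\ref{Thr C0 is MDS}'' and is omitted. Your write-up just fills in those omitted details---upper-triangularity of $A_{t,j_l}=x_{j_l}^tA'_{t,j_l'}$, identification of the diagonal as $(x_{j_l}\lambda_{j_l',a_{j_l'\%m}})^t$, and the case split on $j_l\equiv j_{l'}\pmod m$ feeding conditions (i)--(ii) into Lemma~\ref{lem the important for the proof of this section}---which is precisely the intended argument.
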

 
\begin{proof}
The repair property follows from Lemma \ref{Lemma_dupli_tran}, and the proof of the MDS property is similar to that of Theorem \ref{Thr C0 is MDS}. Therefore, we omit it here. 
\end{proof}

\begin{Theorem}\label{Thm_C3_field}
The requirements in items i) - iii) of Theorem \ref{Thm_C3} can be fulfilled by setting $x_{i}=c^{\lfloor i/n' \rfloor mr}$ for $i\in [0:n)$, where $c$ is a primitive element of 
$\mathbf{F}_q$ with $q>smr$.
\end{Theorem}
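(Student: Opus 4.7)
The plan is to reduce each of the three conditions in Theorem \ref{Thm_C3} to a statement about integer exponents of the primitive root $c$, and to show that the proposed assignment keeps every relevant exponent inside $[0, smr) \subseteq [0, q-1)$ so that no wraparound can occur.

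First I would unpack the $\lambda$ values inherited from the base MSR code. Since $\mathcal{C}_2$ has $d = n-1$, the base code $\mathcal{C}_1$ has $d' = n' - 1 = k' + r - 1$, i.e., $w = r$. Plugging $w = r$ into Theorem \ref{Thm_MSR_field} gives
\[
\lambda_{i',u} = \begin{cases} c^{(i'\%m)\, r + u}, & i' \in [0, m), \\ c^{(i'\%m)\, r + (u+1)\%r}, & i' \in [m, n'), \end{cases}
\]
so in both regimes the exponent equals $(i'\%m)\, r + v$ with $v \in [0, r)$ a value that depends on the regime: $v = u$ in the first, $v = (u+1)\%r$ in the second. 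Multiplying by $x_i = c^{\lfloor i/n' \rfloor mr}$ yields
\[
x_i \lambda_{i',u} \;=\; c^{e}, \qquad e \;=\; (am + b)\, r + v,
\]
where $a = \lfloor i/n' \rfloor \in [0, s)$ and $b = i'\%m \in [0, m)$. Since $b < m$ and $v < r$, the exponent $e$ lies in $[0, smr) \subseteq [0, q-1)$, so equality of two such field elements is equivalent to equality of their $(a, b, v)$-triples as integers.

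Next I would verify each of i)--iii) using this triple encoding. For i), the hypothesis $i \not\equiv j \bmod m$ together with $n' = 2m$ implies $i'\%m \ne j'\%m$, i.e.\ $b \ne b'$, so the triples already differ in the middle coordinate regardless of $u, u'$. For ii), the hypothesis $i \equiv j \bmod m$ gives $b = b'$; if $a \ne a'$ we are done, otherwise $i \ne j$ forces $i' \ne j'$ with $i'\%m = j'\%m$, so $\{i', j'\} = \{b, b+m\}$, putting the two $\lambda$'s in opposite regimes and producing $v = u$ on one side and $v' = (u+1)\%r$ on the other, which differ for $r \ge 2$. For iii), fix $i' \in [0, n')$: if $i' < m$ the exponent is the strictly monotone $i' r + u$, and if $i' \ge m$ it is $(i'-m)r + (u+1)\%r$, where $u \mapsto (u+1)\%r$ permutes $[0, r)$; in either case distinct $u \ne u'$ yield distinct exponents.

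The only delicate point is to keep the two moduli straight: the conditions are stated $\bmod m$ while the block structure of $\mathcal{C}_1$ is $\bmod n' = 2m$, and one must exploit the cyclic shift $(u+1)\%r$ that Theorem \ref{Thm_MSR_field} introduces on the second half of nodes to separate the "twin" pair $(b, b+m)$ in case ii). Once these bookkeeping items are tracked, everything reduces to the observation that the map $(a, b, v) \mapsto (am+b)r + v$ is injective on $[0, s) \times [0, m) \times [0, r)$ with image in $[0, smr)$, which is shorter than the order $q-1$ of $c$.
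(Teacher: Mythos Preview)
Your proposal is correct and follows essentially the same route as the paper's own proof: both decompose each index as $i = a n' + g_0 m + b$ with $a\in[0,s)$, $g_0\in\{0,1\}$, $b\in[0,m)$, use the $w=r$ case of Theorem~\ref{Thm_MSR_field} to write $x_i\lambda_{i',u}=c^{(am+b)r+(u+g_0)\%r}$, and then argue that distinct relevant inputs give distinct exponents in $[0,smr)\subset[0,q-1)$. The paper phrases the last step as a bound on the absolute exponent difference rather than as injectivity of the map $(a,b,v)\mapsto (am+b)r+v$, but the content is identical.
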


\begin{proof}
For $i,j\in [0:n)$, we rewrite them as
$i=v_0n'+i'$ and $j=v_1n'+j'$ for $v_0,v_1\in [0: s)$ and $i',j'\in [0: n)$, and further rewrite $i'$ and $j'$ as $i'=g_0m+i''$ and $j'=g_1m+j''$, where $g_0,g_1\in \{0,1\}$ and $i'',j''\in [0: m)$. 
By \eqref{Eqn the coefficient lambda}, we have
  \begin{eqnarray}\label{Eqn the first eq for proving Thm. 10}
    x_i\lambda_{i',u}=c^{(v_0m+i'')r+(u+g_0)\% r}.
  \end{eqnarray}

Then, by \eqref{Eqn the first eq for proving Thm. 10}, items i) - iv) of Theorem \ref{Thm_C3} can be verified according to the following three cases.

\begin{itemize}
\item For $u, u'\in [0: w)$ and $i,j\in [0: n)$ with $i\not\equiv j\bmod m$, i.e., $i''\ne j''$, we have
\begin{equation*}
x_i\lambda_{i',u}- x_j\lambda_{j',u'}=c^{(v_0m+i'')r+(u+g_0)\% r}(1-c^{\big((v_1-v_0)m+j''-i''\big)r+(u'+g_1)\% r-(u+g_0)\% r})\ne 0
\end{equation*}
since 
$
0<|\big((v_1-v_0)m+j''-i''\big)r+(u'+g_1)\% r-(u+g_0)\% r|\le smr-1<q-1.
$
Then i) of Theorem \ref{Thm_C3} is satisfied.

\item  For $u\in [0: w)$ and $i,j\in [0: n)$ with $i\ne j$ and $i\equiv j\bmod m$, i.e, $i''=j''$, we have
\begin{equation*}
x_i\lambda_{i',u}- x_j\lambda_{j',u}=c^{(v_0m+i'')r+(u+g_0)\% r}(1-c^{(v_1-v_0)mr+(u+g_1)\% r-(u+g_0)\% r})\ne 0
\end{equation*}
since 
$
0<|(v_1-v_0)mr+(u+g_1)\% r-(u+g_0)\% r|\le (s-1)mr+1<q-1,
$
i.e., ii) of Theorem \ref{Thm_C3} is satisfied.

\item It is obvious that iii) of Theorem \ref{Thm_C3} is satisfied according to \eqref{Eqn the coefficient lambda}.
\end{itemize}
This completes the proof.
\end{proof}

\section{Comparisons}\label{sec:comp}
In this section, we provide a detailed comparison of some key parameters among the proposed $(n, k)$ MSR code $\mathcal{C}_1$ with repair degree $d<n-1$, $(n, k)$ MDS array code $\mathcal{C}_2$ with repair degree $d=n-1$, and existing ones.
Table \ref{Table comp MSR} provides the details of the comparison between the proposed $(n, k)$ MSR code $\mathcal{C}_1$ with repair degree $d<n-1$ and existing ones. {Meanwhile, Figure \ref{pic_comp} shows the sub-packetization levels and required field sizes of each code with a repair degree of $d=k+2$ when the code length ranges from $10$ to $100$.}

\begin{table}[htbp]
\begin{center}
\caption{A comparison of the key parameters of $(n,k)$ MSR codes with sub-packetization level $N$.}\label{Table comp MSR}
\setlength{\tabcolsep}{2.6pt}
\begin{tabular}{|c|c|c|c|c|}
\hline  
& $N$& Field size  & {Repair degree}  &  References \\
\hline
 YB code 1    &$w^{n}$   &  $q\ge wn$  &{ $d=k+w-1\in [k+1:n)$}  &  \cite[Section IV]{Barg1} \\ 
\hline
YB code 2      &$w^{n}$   &  $q>n$  &  {   $d=k+w-1\in [k+1:n)$ }&   \cite[Section VIII]{Barg1}\\
\hline
\multirow{2}{*}{VBK code}    &$w^{\lceil \frac{n}{w} \rceil}$   &\multirow{2}{*}{  $q\ge \hspace{-1mm}\left\{\hspace{-2mm}\begin{array}{ll}
6\lceil \frac{n}{2} \rceil+2, w=2\\
18\lceil \frac{n}{w} \rceil+2, w=3,4
\end{array} \right.$}  &  { \multirow{2}{*}{$d=k+1, k+2, k+3$}} &   \multirow{2}{*}{\cite{vajha2021small}}\\ 
&($w=2,3,4$)&&&\\[4pt]
\hline
CB code     &$w^{n}$   &  $q>n+w$ &{  $d=k+w-1\in [k+1:n)$}   &     \cite{chen2019explicit}\\
\hline
LLT code    &$w^{\lceil \frac{n}{2} \rceil}$  &$q>n+\lceil\frac{n}{2} \rceil w$  &{  $d=k+w-1\in [k+1:n-1)$}   &    \cite{liu2022generic}\\
\hline
ZZ code  &$w^{\lceil \frac{n}{2} \rceil}$ &    $q\ge wn$   & { $d=k+w-1\in [k+1:n)$}  & \cite{zhang2023vertical}  \\
\hline
New code $\mathcal{C}_1$    &$w^{\lceil \frac{n}{2} \rceil}$    &  $q>\left\{\hspace{-2mm}\begin{array}{ll}
4\lceil \frac{n}{2}\rceil, w=2\\
\lceil \frac{n}{2}\rceil (w+1), w\in [3: r)\\
\lceil \frac{n}{2}\rceil w, w=r
\end{array} \right.$  &  {$d=k+w-1\in [k+1:n)$}  & Theorem \ref{Thm_MSR_field}\\
\hline
\end{tabular}
\end{center}
\end{table}

\begin{figure*}[htbp]
\centering
\hspace{-5mm}
\begin{minipage}[t]{0.4\textwidth}
\includegraphics[scale=.7]{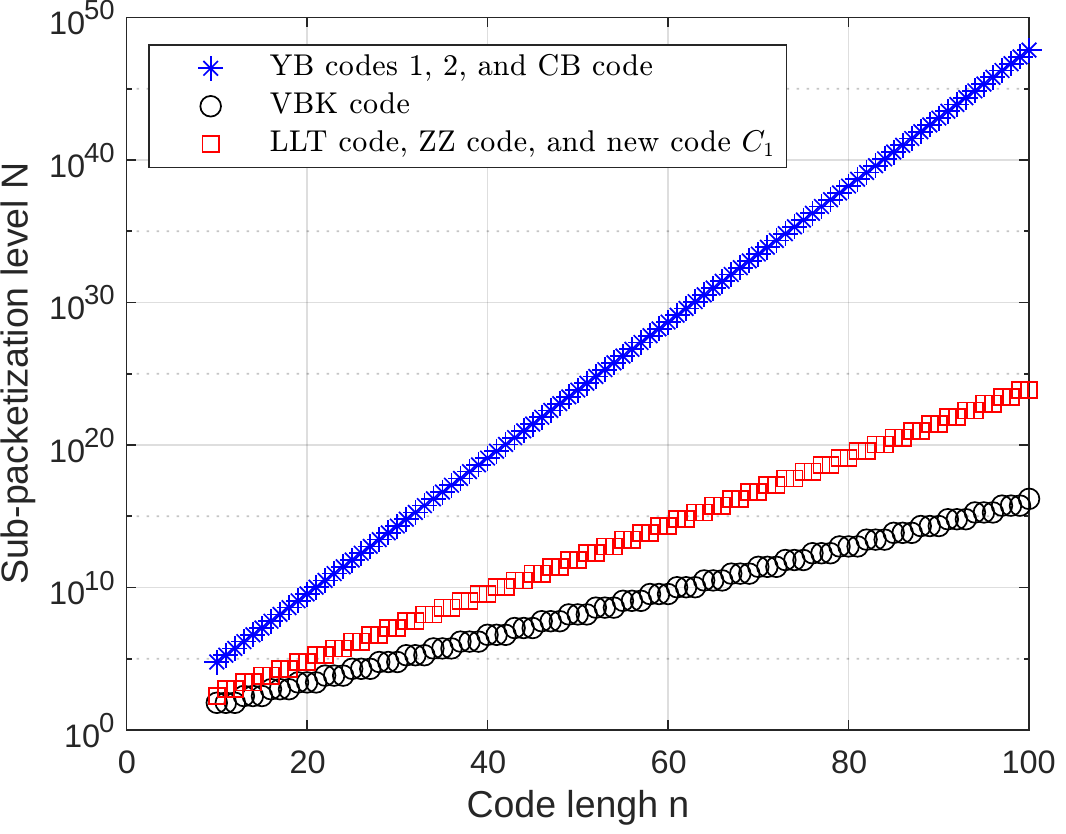}
\end{minipage}
\hspace{18mm}
\begin{minipage}[t]{0.4\textwidth}
\includegraphics[scale=.7]{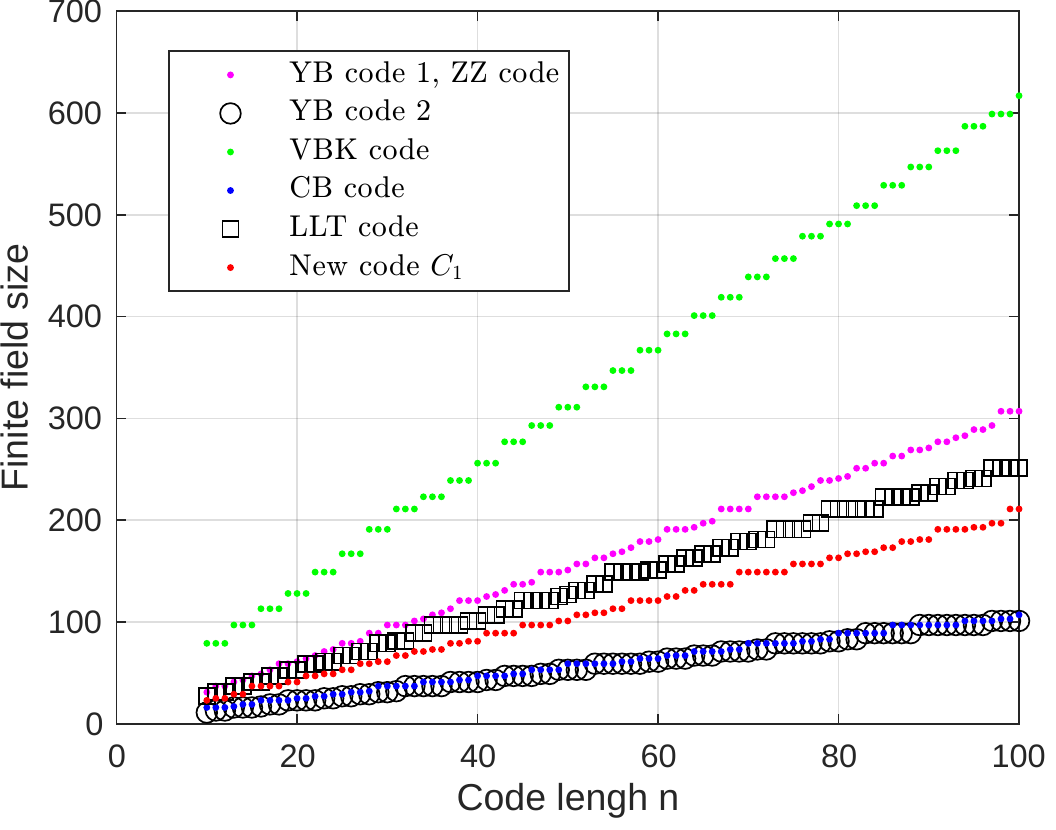}
\end{minipage}
\caption{Comparision of the sub-packetization level and finite field size among the new $(n, k)$ MSR code $\mathcal{C}_1$ and some known ones with repair degree $d=k+2$.}\label{pic_comp}
\end{figure*}

From Table \ref{Table comp MSR} and Figure \ref{pic_comp}, we see that the new MSR code $\mathcal{C}_1$ has the following advantages.
\begin{itemize}
\item [i)] The new MSR code $\mathcal{C}_1$ has a significantly smaller sub-packetization level than the YB codes 1, 2 in  \cite{Barg1}, and the CB code in \cite{chen2019explicit}, and a smaller finite field than that of YB code 1.

\item [ii)] The new MSR code $\mathcal{C}_1$ works for any repair degree $d\in [k+1 :n)$, which is much more flexible than that of the VBK code in \cite{vajha2021small}, which is restricted to $d=k+1, k+2, k+3$. Additionally, $\mathcal{C}_1$ requires a much smaller finite field than the VBK code when $w\in \{2,3,4\}$. {Specifically, when $w=2, 3$, and $4$, $\mathcal{C}_1$ requires a finite field $\mathbf{F}_q$ with size $q>4\lceil \frac{n}{2}\rceil$, $q>3\lceil \frac{n}{2}\rceil$, and $q>4\lceil \frac{n}{2}\rceil$, respectively. In contrast, the VBK code requires  a finite field $\mathbf{F}_q$ with size $q>6\lceil \frac{n}{2}\rceil+1$, $18\lceil \frac{n}{3}\rceil+1$, and $18\lceil \frac{n}{4}\rceil+1$, respectively.  However, it should be noted that when $d\in \{k+2,k+3\}$, the VBK code has a smaller sub-packetization level than that of the new code $\mathcal{C}_1$.}


\item [iii)] $\mathcal{C}_1$ has the same sub-packetization level as that of the LLT code in \cite{liu2022generic} and ZZ code in \cite{zhang2023vertical}. However, the $(n, k)$ LLT code does not work for $d=n-1$ and requires a larger finite field than $\mathcal{C}_1$, while the ZZ code requires a larger finite field than $\mathcal{C}_1$ when $d>k+1$. 

\item [iv)] The new MSR code $\mathcal{C}_1$ subsumes the YB code 1 in  \cite{Barg1} and the CB code in \cite{chen2019explicit} as subcodes, i.e.,  YB code 1 and  CB code can be obtained by shortening the new code $\mathcal{C}_1$.
\end{itemize}

Table \ref{Table comp_duplication} provides the details of the comparison between the proposed MDS array code $\mathcal{C}_2$ and existing ones with $(1+\epsilon)$-optimal repair bandwidth and repair degree $d=n-1$. {Figure \ref{pic_comp2} provides an additional example of the comparison of sub-packetization levels among the codes listed in Table \ref{Table comp_duplication}, with the exception of RTGE code 2 in \cite{Rawat}. This code relies on the existence of an error-correcting code with specific parameters, which may not always be available.}

\begin{table}[htbp]
\begin{center}
\caption{A comparison of the key parameters among the new $(n=sn',k)$ MDS array code $\mathcal{C}_2$ and existing ones with $(1+\epsilon)$-optimal repair bandwidth and repair degree $d=n-1$, where $\epsilon=\frac{(s-1)(r-1)}{n-1}$ and $r=n-k$.}\label{Table comp_duplication}
\begin{tabular}{|c|c|c|c|}
\hline
& {Sub-packatization $N$}& {Field size } &Repair bandwidth    \\
\hline
RTGE code 1 in \cite{Rawat} &   $r^{\lceil \frac{n'}{r}\rceil}$  & $q>n^{(r-1)N+1}$    & $(1+\epsilon)\gamma_{\rm optimal}$  \\
\hline
RTGE code 2 in \cite{Rawat} &   $O(r^{r\tau}\log n)$  & $O(n)$    & $\le (1+\frac{1}{\tau})\gamma_{\rm optimal}$    \\
\hline
MDS code $\mathcal{C}_1$ in \cite{eMSR_d_eq_n-1}  & $r^{n'}$ &  $q>rn'\lceil \frac{s}{r}\rceil$, $r\mid (q-1)$ {(i.e., $O(n)$)} & $(1+\epsilon)\gamma_{\rm optimal}$   \\
\hline
MDS code $\mathcal{C}_2$ in \cite{eMSR_d_eq_n-1}  & $r^{n'-1}$ &  $q>r\lceil \frac{n'}{r}\rceil(s-1)+n'$ {(i.e., $O(n)$)} & $(1+\epsilon)\gamma_{\rm optimal}$   \\
\hline
MDS code $\mathcal{C}_4$ in \cite{eMSR_d_eq_n-1}  & $r^{\lceil\frac{n'}{r+1}\rceil}$ &  $\begin{array}{ll}
q>\lceil\frac{2n}{3}\rceil, & {\rm if}~r=2\\
q>N {n-1\choose r-1}+1, & {\rm if}~r>2
\end{array}
$ & $(1+\epsilon)\gamma_{\rm optimal}$   \\
\hline
MDS code $\mathcal{C}_5$ in \cite{eMSR_d_eq_n-1}  &  $r^{n'}$ &  $q>rn'\lceil \frac{s}{r}\rceil$ {i.e., ($O(n)$)}& $(1+\epsilon)\gamma_{\rm optimal}$   \\
\hline
New MDS code $\mathcal{C}_2$  & $r^{\lceil \frac{n'}{2}\rceil}$ &  $q>sr\lceil \frac{n'}{2}\rceil$ {(i.e., $O(sn/2)$)}& $(1+\epsilon)\gamma_{\rm optimal}$  \\
\hline
\end{tabular}
\end{center}
\end{table}

\begin{figure*}[htbp]
\centering
\includegraphics[scale=.65]{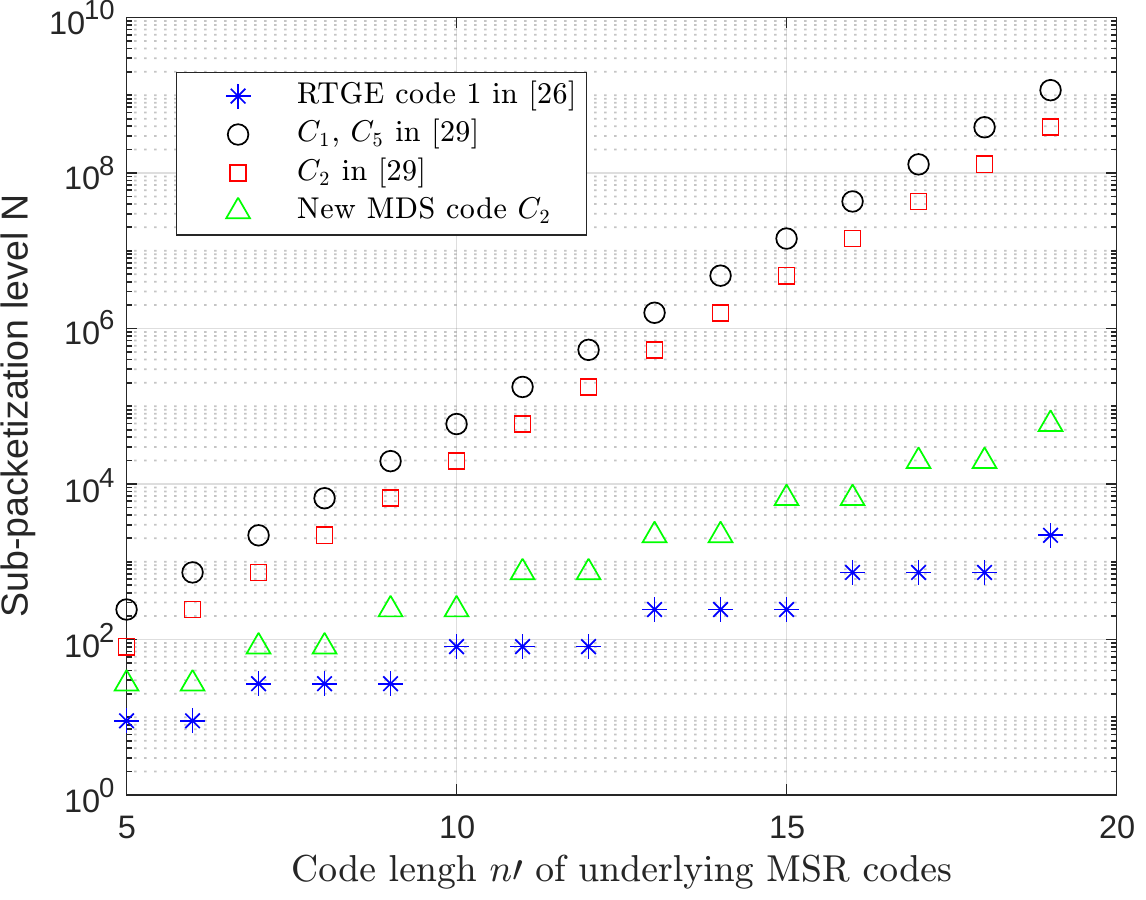}
\caption{Comparision of the sub-packetization level among the new $(n, k)$ MSR code $\mathcal{C}_1$ and some known ones with $r=3$.}\label{pic_comp2}
\end{figure*}

From Table \ref{Table comp_duplication} and {Figure \ref{pic_comp2}}, we can see that the proposed MDS array code $\mathcal{C}_2$ has the
following advantages compared to existing ones:

\begin{itemize}
\item Under the same repair bandwidth, the new MDS code $\mathcal{C}_2$ has a much smaller sub-packetization level when compared to the MDS array codes $\mathcal{C}_1$, $\mathcal{C}_2$, and $\mathcal{C}_5$ in \cite{eMSR_d_eq_n-1}. 
\item By noting that the MDS array code $\mathcal{C}_4$ in \cite{eMSR_d_eq_n-1} is implicit when $r>2$, we have that among all the explicit MDS array codes with $(1+\epsilon)$-optimal repair bandwidth and $r>2$, the new MDS code $\mathcal{C}_2$ has the smallest sub-packetization level under the same code parameters except for the RTGE code 1 in \cite{Rawat}, which requires a super large finite field.
\end{itemize}

\section{Conclusion}\label{sec:conclusion}
In this paper, we proposed a new $(n, k)$ MSR code construction that works for any repair degree $d>k$. The new MSR code has a smaller sub-packetization level or finite field than existing ones. Additionally, we obtained a new $(n,k)$ MDS array code with a small sub-packetization level, $(1+\epsilon)$-optimal repair bandwidth, and repair degree $d=n-1$, which outperforms existing ones in terms of the sub-packetization level or the field size. For $(n,k)$ MDS array code with small sub-packetization level, $(1+\epsilon)$-optimal repair bandwidth, and repair degree $d<n-1$, few results have been reported in the literature. To the best of our knowledge, the only one is the construction in \cite{Guruswami2020}, which only works for very large parameters $n, k$ and requires a huge finite field, thus {it is infeasible} to be implemented in practical systems. Constructions of $(n,k)$ MDS array code over small finite fields with small sub-packetization level, $(1+\epsilon)$-optimal repair bandwidth, and repair degree $d<n-1$ will be left for our future research.

\appendices
\section{Proof of Lemma \ref{lem the important for the proof of this section}}\label{sec:upper triangle lemma} 
For $i\in [0: rN)$, let $e_i$ be row $i$ of the identity matrix of order $rN$. Then define an $rN\times rN$ permutation matrix $\Psi$ as
\begin{equation*}
\Psi=(e_0^{\top}, e_N^{\top}, \ldots, e_{(r-1)N}^{\top}, e_1^{\top}, e_{1+N}^{\top}, \ldots, e_{1+(r-1)N}^{\top}, \ldots, e_{N-1}^{\top}, e_{N-1+N}^{\top}, \ldots, e_{N-1+(r-1)N}^{\top})^{\top},
\end{equation*}
where $\top$ denotes the transpose operator.

   Multiplying matrices $\Psi$ and $\Psi^\top$ on the left and right sides of matrix $B=(B_{t,i})_{t\in [0: r),i\in[0: r)}$, respectively, we then have
\begin{equation*}
 \Psi B\Psi^\top =\begin{pmatrix}
 B'_{0,0} &  B'_{0,1} & \cdots & B'_{0,N-1}\\
           B'_{1,0} &  B'_{1,1} & \cdots & B'_{1,N-1}\\
           \vdots & \vdots & \vdots & \vdots\\
           B'_{N-1,0} &  B'_{N-1,1} & \cdots & B'_{N-1,N-1}     
 \end{pmatrix},
\end{equation*}
where
\begin{equation*}
B'_{a,b}=\begin{pmatrix}
 B_{0,0}[a,b] & B_{0,1}[a,b] & \cdots & B_{0,r-1}[a,b]\\ 
B_{1,0}[a,b] & B_{1,1}[a,b] & \cdots & B_{1,r-1}[a,b]\\
 \vdots & \vdots & \ddots & \vdots \\
B_{r-1,0}[a,b] & B_{r-1,1}[a,b] & \cdots & B_{r-1,r-1}[a,b]
\end{pmatrix}, a,b\in [0:N).
\end{equation*}
By \eqref{Eqn_upper}, we have $B'_{a,b}=\bf{0}$ for $0\le b<a<N$.
For $a\in [0:N)$, by i), we have
\begin{equation}
B'_{a,a}= \begin{pmatrix}1 & 1 & \cdots &1\\ 
B_{1,0}[a,a] & B_{1,1}[a,a] & \cdots & B_{1,r-1}[a,a]\\
 \vdots & \vdots & \ddots & \vdots \\
B_{1,0}^{r-1}[a,a] & B_{1,1}^{r-1}[a,a] & \cdots & B_{1,r-1}^{r-1}[a,a]
\end{pmatrix},
\end{equation}
which is a Vandermonde matrix and is non-singular according to ii). 
Therefore, 
\begin{equation*}
  |\Psi||B||\Psi^\top |=|\Psi B\Psi^\top |=\left|\begin{array}{cccc}
          B'_{0,0} &  B'_{0,1} & \cdots & B'_{0,N-1}\\
           &  B'_{1,1} & \cdots & B'_{1,N-1}\\
            &   & \ddots & \vdots\\
             &  &  & B'_{N-1,N-1}\\
      \end{array}\right|=\prod\limits_{a=0}^{N-1}|B'_{a,a}|\ne 0,
\end{equation*}
which implies that $B$ is non-singular.

\section{Proof of Lemma \ref{lem the form between S and A}}\label{sec:C0repair}

\begin{proof}
Hereafter we only check the case of $i\in [0: m)$ since the other case can be proved similarly. For any given $a=(a_{0},a_{1},\ldots,a_{m-2})\in [0: N/w)$, $u\in [0: w)$, and $j\in [0: n)$, according to \eqref{Eqn the PCM of code C0}, we have\\
	\begin{align}\label{eqn e_aQti}
		V_{i,u}[a,:]A_{t,j}\hspace{-1mm}&=\hspace{-1mm}\left\{\hspace{-2mm}\begin{array}{ll}
        e_{g_{i,u}(a)}\left(\sum\limits_{b=0}^{N-1}\lambda_{j,b_j}^t e_b^\top e_b+\sum\limits_{b=0,b_j=0}^{N-1}\sum\limits_{v=1}^{w-1}(\lambda_{j,0}^t-\lambda_{j,v}^t)e_b^\top e_{b(j,v)}\right),  & \textrm{if~}j\in [0: m),  \\
        e_{g_{i,u}(a)}\left(\sum\limits_{b=0}^{N-1}\lambda_{j,b_{j-m}}^t e_b^\top e_b\right), & \textrm{if~}j\in [m:n), \\
    \end{array}\right.\notag\\
        \hspace{-3mm}&=\hspace{-1mm}\left\{\hspace{-2mm}\begin{array}{ll}
		   \lambda_{j,0}^t e_{g_{i,u}(a)}+\sum\limits_{v=1}^{w-1}   (\lambda_{j,0}^t-\lambda_{j,v}^t) e_{(g_{i,u}(a))(j,v)}, & \textrm{if~}j\in [0:m)\textrm{~and~}(g_{i,u}(a))_j=0,\\
              \lambda_{j,(g_{i,u}(a))_j}^t e_{g_{i,u}(a)}, & \textrm{if~}j\in [0: m)\textrm{~and~}(g_{i,u}(a))_j\ne 0,\\
		  \lambda_{j,(g_{i,u}(a))_{j-m}}^t e_{g_{i,u}(a)}, & \textrm{if~}j\in [m: n).\\
		\end{array}\right.
	\end{align}
where the two equalities follow from \eqref{Eqn re Vu[a] for liu} and \eqref{Eqn e_ae_bT}, respectively.  

By \eqref{Eqn ga b} and \eqref{Eqn ga b (j,v)}, we have $(g_{i,u}(a))_i=u$ and $e_{g_{i,u}(a)(i,v)}=e_{g_{i,v}(a)}$. Then by \eqref{Eqn re Vu[a] for liu} and \eqref{eqn e_aQti}, when $j\equiv i\bmod m$, i.e., $j=i$ or $j=i+m$, we have
	\begin{equation*}\label{eqn VtQti for VBK code eq 1}
		V_{i,u}[a,:]A_{t,j}=\left\{\begin{array}{ll}
			\lambda_{i,0}^tV_{i,0}[a,:]+\sum\limits_{v=1}^{w-1}(\lambda_{i,0}^t-\lambda_{i,v}^t)V_{i,v}[a,:], & \textrm{if~} j=i,u=0,\\
			\lambda_{j,u}^tV_{i,u}[a,:], & \textrm{if~}j=i,u\ne0 \textrm{~or~}j=i+m.
		\end{array}\right.
	\end{equation*}
	That is, for $j\equiv i\bmod m$ and $u\in [0: w)$, we have
	\begin{equation*}\label{eqn VtQti for VBK code}
		V_{i,u}A_{t,j}=\left\{\begin{array}{ll}
			\lambda_{i,0}^tV_{i,0}++\sum\limits_{v=1}^{w-1}(\lambda_{i,0}^t-\lambda_{i,v}^t)V_{i,v}, & \textrm{if~} j=i,u=0,\\
			\lambda_{j,u}^tV_{i,u}, & \textrm{if~}j=i,u\ne0 \textrm{~or~}j=i+m.
			\end{array}\right.
	\end{equation*}
	which together with \eqref{Eqn the repair and select matrix of code C0} implies  
	\begin{equation*}
		S_{i,t}A_{t,i}=V_{i,0}A_{t,i}=\lambda_{i,0}^tV_{i,0}+(\lambda_{i,0}^t-\lambda_{i,1}^t)V_{i,1}+\cdots+(\lambda_{i,0}^t-\lambda_{i,w-1}^t)V_{i,w-1}
	\end{equation*}
	and 
	$
		S_{i,t}A_{t,j}=\lambda_{j,0}^t R_{i,j}
	$  for  $j=i+m$,  
	i.e., i) is true and ii) holds for $i, j \in [0: n)$ with $j\ne i$ and $j\equiv i\bmod m$.
	
Next, we prove that ii) holds for $j\not\equiv i \bmod m$. Recall that we only check the case of $i\in [0: m)$, which is discussed in the following four cases.

\textit{Case 1.} If $j\in [0: i)$, by applying \eqref{Eqn ga b} and \eqref{Eqn ga b (j,v)} to \eqref{eqn e_aQti},  we then have
	\begin{eqnarray}\label{Eqn the product between V[a] and Atj}
	&&V_{i,u}[a,:]\cdot A_{t,j}\notag\\
		&=&\left\{\begin{array}{ll}
		   \lambda_{j,0}^t e_{g_{i,u}(a)}+\sum\limits_{v=1}^{w-1}(\lambda_{j,0}^t-\lambda_{j,v}^t) e_{g_{i,u}(a(j,v))}, & \textrm{if~}a_j=0,\\
              \lambda_{j,a_j}^t e_{g_{i,u}(a)}, & \textrm{otherwise},
		\end{array}\right.\notag\\
		&=&\left\{\begin{array}{ll}
		    \left(\lambda_{j,0}^t e_a^{(N/w)}+\sum\limits_{v=1}^{w-1}   (\lambda_{j,0}^t-\lambda_{j,v}^t) e_{a(j,v)}^{(N/w)}\right)\cdot \sum\limits_{b=0}^{N/w-1}(e_b^{(N/w)})^\top e_{g_{i,u}(b)}, & \textrm{if~}a_j=0,\\
              \lambda_{j,a_j}^t e_a^{(N/w)}\cdot \sum\limits_{b=0}^{N/w-1}(e_b^{(N/w)})^\top e_{g_{i,u}(b)}, & \textrm{otherwise},
		\end{array}\right.\notag\\
		&=&e_a^{(N/w)}\left(\sum\limits_{b=0}^{N/w-1}\lambda_{j,b_j}^t(e_b^{(N/w)})^\top e_b^{(N/w)}+\sum\limits_{b=0,b_j=0}^{N/w-1}\sum\limits_{v=1}^{w-1}(\lambda_{j,0}^t-\lambda_{j,v}^t)(e_b^{(N/w)})^\top e_{b(j,v)}^{(N/w)}\right)V_{i,u}\notag\\
		&=&e_a^{(N/w)}B_{t,j,i}\cdot V_{i,u}\notag\\
		&=&B_{t,j,i}[a,:]\cdot V_{i,u}
	\end{eqnarray}
where the second, third, and fourth equalities follow from \eqref{Eqn e_ae_bT}, \eqref{Eqn re Vu for liu}, and \eqref{Eqn the bmatrix bar{B}}, respectively. Applying \eqref{Eqn the repair and select matrix of code C0} to \eqref{Eqn the product between V[a] and Atj}, we have $S_{i,t}A_{t,j}=B_{t,j,i}R_{i,j}$, which finishes the proof of this case.	

\textit{Case 2.}   If $j\in [i+1: m)$, similar to the proof of Case 1, we also have  $S_{i,t}A_{t,j}=B_{t,j,i}R_{i,j}$.

\textit{Case 3.} If $j\in [m: m+i)$, then $(g_{i,u}(a))_{j-m}=a_{j-m}$ by \eqref{Eqn ga b}. By \eqref{eqn e_aQti}, we have
\begin{eqnarray}\label{Eqn the product between V[a] and Atj for case 3}
	V_{i,u}[a,:]\cdot A_{t,j}
		&=&\lambda_{j,a_{j-m}}^t e_{g_{i,u}(a)}\notag\\
		&=&\lambda_{j,a_{j-m}}^t e_a^{(N/w)}\cdot \sum\limits_{b=0}^{N/w-1}(e_b^{(N/w)})^\top e_{g_{i,u}(b)}\notag\\
		&=&e_a^{(N/w)}(\sum\limits_{b=0}^{N/w-1}\lambda_{j,b_{j-m}}^t(e_b^{(N/w)})^\top e_b^{(N/w)})V_{i,u}\notag\\
		&=&e_a^{(N/w)} B_{t,j,i}\cdot V_{i,u}\notag\\
		&=&B_{t,j,i}[a,:]\cdot V_{i,u}
	\end{eqnarray}
where the second, third, and fourth equalities follow from \eqref{Eqn e_ae_bT}, \eqref{Eqn re Vu for liu}, and \eqref{Eqn the bmatrix bar{B}}, respectively. Thus we have $S_{i,t}A_{t,j}=B_{t,j,i}R_{i,j}$ by combining \eqref{Eqn the repair and select matrix of code C0} and \eqref{Eqn the product between V[a] and Atj for case 3}.

\textit{Case 4.} If $j\in [m+i+1: n)$, similar to the proof of Case 3, we also have  $S_{i,t}A_{t,j}=B_{t,j,i}R_{i,j}$.

Collecting the above four cases, we can derive that ii)  holds for $0\le i\ne j<n$ with $i\not\equiv j\bmod m$. While
the proof of iii) is similar to the analysis in \eqref{Eqn the value of A(a,b)}; thus, we omit it.
\end{proof}
\vspace{-5mm}

\end{document}